\documentclass[a4paper,UKenglish,cleveref, autoref, thm-restate]{lipics-v2021} 

\pdfoutput=1 
\hideLIPIcs  


\usepackage[references]{agda}
\usepackage{tikz-cd}
\usepackage{quiver}
\usepackage{stmaryrd} 
\usepackage{mathtools} 
\usepackage[textwidth=30mm]{todonotes} 
\usepackage{newunicodechar} 
\usepackage{amsmath} 
\usepackage{tikz} 
\usepackage{tikz-cd} 
\usepackage{xcolor} 
\usepackage{xspace} 
\usepackage{xargs} 
\usepackage{adjustbox} 
\usepackage{catchfilebetweentags}
\usepackage{cite} 

\newunicodechar{λ}{\ensuremath{\mathnormal\lambda}}
\newunicodechar{ℕ}{\ensuremath{\mathbb{N}}}
\newunicodechar{∧}{\ensuremath{\mathnormal\wedge}}
\newunicodechar{⇒}{\ensuremath{\mathnormal\Rightarrow}}
\newunicodechar{∨}{\ensuremath{\mathnormal\vee}}
\newunicodechar{→}{\ensuremath{\mathnormal\to}}
\newunicodechar{∀}{\ensuremath{\mathnormal\forall}}
\newunicodechar{₁}{\ensuremath{\mathnormal{_1}}}
\newunicodechar{₂}{\ensuremath{\mathnormal{_2}}}
\newunicodechar{₃}{\ensuremath{\mathnormal{_3}}}
\newunicodechar{₄}{\ensuremath{\mathnormal{_4}}}
\newunicodechar{₉}{\ensuremath{\mathnormal{_9}}}
\newunicodechar{⊤}{\ensuremath{\top}}
\newunicodechar{⊥}{\ensuremath{\perp}}
\newunicodechar{∷}{\ensuremath{\cln}}
\newunicodechar{ε}{\ensuremath{\varepsilon}}
\newunicodechar{∞}{\ensuremath{\infty}}
\newunicodechar{Π}{\ensuremath{\Pi}}
\newunicodechar{π}{\ensuremath{\pi}}
\newunicodechar{Σ}{\ensuremath{\Sigma}}
\newunicodechar{∈}{\ensuremath{\mathnormal\in}}
\newunicodechar{≡}{\ensuremath{\mathnormal\equiv}}
\newunicodechar{×}{\ensuremath{\mathnormal\times}}
\newunicodechar{⟦}{\ensuremath{\mathnormal\llbracket}}
\newunicodechar{⟧}{\ensuremath{\mathnormal\rrbracket}}
\newunicodechar{ʳ}{\ensuremath{^r}}
\newunicodechar{⊎}{\ensuremath{\uplus}}
\newunicodechar{∎}{\ensuremath{\blacksquare}}
\newunicodechar{⟨}{\ensuremath{\langle}}
\newunicodechar{⟩}{\ensuremath{\rangle}}
\newunicodechar{ℓ}{\ensuremath{\ell}}
\newunicodechar{α}{\ensuremath{\alpha}}
\newunicodechar{β}{\ensuremath{\beta}}
\newunicodechar{σ}{\ensuremath{\sigma}}
\newunicodechar{⋄}{\ensuremath{\diamond}}
\newunicodechar{ι}{\ensuremath{\iota}}
\newunicodechar{∘}{\ensuremath{\circ}}
\newunicodechar{⦉}{\ensuremath{\triangleleft}}
\newunicodechar{●}{\ensuremath{\mathbin{\vcenter{\hbox{\scalebox{0.6}{$\bullet$}}}}}}
\newunicodechar{γ}{\ensuremath{\gamma}}
\newunicodechar{⁻}{\ensuremath{^-}}
\newunicodechar{∙}{\ensuremath{\mathbin{\vcenter{\hbox{\scalebox{0.4}{$\bullet$}}}}}}
\newunicodechar{∫}{\ensuremath{\smallint}}
\newunicodechar{⊚}{\ensuremath{\circ}}
\newunicodechar{𝕓}{\ensuremath{\overline\beta}}
\newunicodechar{𝕥}{\ensuremath{\tilde{\beta}}}
\newunicodechar{⋆}{\ensuremath{\star}}
\newunicodechar{∼}{\ensuremath{\sim}}
\newunicodechar{◹}{}
\newunicodechar{◃}{}
\newunicodechar{≈}{\ensuremath{\approx}}
\newunicodechar{æ}{\ensuremath{\overline\alpha}}
\newunicodechar{♭}{\ensuremath{\tilde\beta}}
\newunicodechar{μ}{\ensuremath{\mu}}
\newunicodechar{ν}{\ensuremath{\nu}}
\newunicodechar{χ}{\ensuremath{\chi}}
\newunicodechar{ξ}{\ensuremath{\xi}}
\newunicodechar{ϕ}{\ensuremath{\phi}}
\newunicodechar{η}{\ensuremath{\eta}}
\newunicodechar{≅}{\ensuremath{\cong}}
\newunicodechar{₀}{\ensuremath{_0}}
\newunicodechar{₁}{\ensuremath{_1}}
\newunicodechar{¹}{\ensuremath{^1}}
\newunicodechar{≊}{\ensuremath{\approxeq}}
\newunicodechar{≈}{\ensuremath{\approx}}
\newunicodechar{≃}{\ensuremath{\simeq}}
\newunicodechar{⁈}{{\color{black}\ensuremath{\dots}}}
\newunicodechar{♥}{\ensuremath{\approxeq}}

\DeclarePairedDelimiter{\pa}{(}{)}
\DeclarePairedDelimiter{\pads}{\llbracket}{\rrbracket}

\definecolor{agdablue}{HTML}{0000CD}
\definecolor{agdagreen}{HTML}{008B00}
\definecolor{agdapink}{HTML}{EE1289}

\setuldepth{Set}
\newlength{\depthofsumsign}
\setlength{\depthofsumsign}{\depthof{$\sum$}}
\newcommand{\systemname}[1]{\texttt{\color{darkgray}#1}\xspace}
\newcommand{\cubicalagda}{\systemname{Cubical Agda}}
\newcommand{\agdaCubical}{\systemname{agda/cubical}}
\newcommand{\agda}{\systemname{Agda}}
\newcommand{\agdablue}[1]{\normalfont{\AgdaFunction{#1}}}
\newcommand{\agdagreen}[1]{\normalfont{\AgdaInductiveConstructor{#1}}}
\newcommand{\agdapink}[1]{\normalfont{\AgdaField{#1}}}
\newcommand{\agdaorange}[1]{\normalfont{\AgdaKeyword{#1}}}
\newcommand{\cat}[1]{\normalfont{\ul{\textbf{#1}}}}

\newcommand{\cont}[2]{#1\triangleleft #2}
\newcommand{\contfunc}[2]{\llbracket #1 \triangleleft #2 \rrbracket}

\newcommand{\contfuncX}[3]{\llbracket #1 \triangleleft #2 \rrbracket\,#3}
\newcommand{\bigcontfuncX}[3]{\pads[\Big]{#1 \triangleleft #2}\,#3}

\newcommand{\cln}{\ensuremath{\mathbin{\raisebox{.8pt}{\scalebox{0.85}{:\kern-.3pt:}}}}}
\newcommand{\danascott}{\llbracket\text{\textunderscore}\rrbracket}

\newcommand{\Fin}{\mathsf{Fin}}
\newcommand{\nsum}[1][1]{
    \mathop{%
        \raisebox
        {-#1\depthofsumsign+1\depthofsumsign}
        {\scalebox
            {#1}
            {$\displaystyle\sum$}%
        }
    }
}
\newcommand{\nprod}[1][1]{
    \mathop{%
        \raisebox
        {-#1\depthofsumsign+1\depthofsumsign}
        {\scalebox
            {#1}
            {$\displaystyle\prod$}%
        }
    }
}

\newcommand{\comm}[1]{\ensuremath{comm_{#1}}}

\newcommand{\minipageskip}{\vspace{-.3cm}}

\newcommand{\agdaeq}[2]{{#1}\;\agdablue{$\equiv$}\;{#2}}

\newcommand{\myarrow}[1][]{%
\raisebox{0.13 ex}{
  \begin{tikzpicture}[#1]%
    \draw (0,0.75ex) -- (0,-.25ex) -- (0.3em,-.25ex);
    \draw (-0.4ex,0.4ex) -- (0,0.75ex) -- (0.4ex,0.4ex);
  \end{tikzpicture}%
  }
}

\newcommand{\mycomment}[1]{
\begin{minipage}[t]{.1\textwidth}
  \hspace{1cm}{\myarrow}
\end{minipage}%
\colorbox{yellow!15}{
\begin{minipage}[t]{.85\textwidth}
{\scriptsize #1
}
\end{minipage}
}
\begin{minipage}[t]{.05\textwidth}
  \hfill
\end{minipage}
}

\bibliographystyle{plainurl}

\title{Formalising Inductive and Coinductive Containers} 


\author{Stefania Damato}{School of Computer Science, University of Nottingham, UK \and \url{https://stefaniatadama.com}}{stefania.damato@nottingham.ac.uk}{https://orcid.org/0009-0001-7182-5304}{} 

\author{Thorsten Altenkirch}{School of Computer Science, University of Nottingham, UK \and \url{https://www.cs.nott.ac.uk/~psztxa/}}{thorsten.altenkirch@nottingham.ac.uk}{}{}

\author{Axel Ljungström}{Department of Mathematics, Stockholm University, Sweden \and \url{https://aljungstrom.github.io}}{axel.ljungstrom@math.su.se}{https://orcid.org/0000-0001-6946-0775}{This author is supported by the Swedish Research Council (Vetenskapsrådet)
under Grant No. 2019-04545.} 

\authorrunning{S. Damato, T. Altenkirch, and A. Ljungström} 

\Copyright{Stefania Damato, Thorsten Altenkirch, and Axel Ljungström} 

\begin{CCSXML}
	<ccs2012>
	<concept>
	<concept_id>10003752.10003790.10011740</concept_id>
	<concept_desc>Theory of computation~Type theory</concept_desc>
	<concept_significance>500</concept_significance>
	</concept>
	</ccs2012>
\end{CCSXML}

\ccsdesc[500]{Theory of computation~Type theory}

\keywords{type theory; container; initial algebra; terminal coalgebra; Cubical Agda} 


\relatedversion{} 



\acknowledgements{The authors would like to thank Anders Mörtberg, the anonymous reviewers, and the Agda Zulip and Discord communities for useful comments and discussions.}

\nolinenumbers 

\EventEditors{Yannick Forster and Chantal Keller}
\EventNoEds{2}
\EventLongTitle{16th International Conference on Interactive Theorem Proving (ITP 2025)}
\EventShortTitle{ITP 2025}
\EventAcronym{ITP}
\EventYear{2025}
\EventDate{September 29--October 2, 2025}
\EventLocation{Reykjavik, Iceland}
\EventLogo{}
\SeriesVolume{352}
\ArticleNo{17}

\begin{document}

\maketitle

\begin{abstract}
\label{sec-abstract}
Containers capture the concept of strictly positive data types in programming. The original development of containers is done in the internal language of locally cartesian closed categories (LCCCs) with disjoint coproducts and W-types, and uniqueness of identity proofs (UIP) is implicitly assumed throughout. Although it is claimed that these developments can also be interpreted in extensional Martin-L\"of type theory, this interpretation is not made explicit. In this paper, we present a formalisation of the results that `containers preserve least and greatest fixed points' in \cubicalagda, thereby giving a formulation in intensional type theory. Our proofs do not make use of UIP and thereby generalise the original results from talking about container functors on $\mathsf{Set}$ to container functors on the wild category of types. Our main incentive for using \cubicalagda is that its path type restores the equivalence between bisimulation and coinductive equality. Thus, besides developing container theory in a more general setting, we also demonstrate the usefulness of \cubicalagda's path type to coinductive proofs.
\end{abstract}

\section{Introduction}
\label{sec-intro}
An inductive type is a type given by a list of constructors, each specifying a way to form an element of the type. Defining types inductively is a central idea in Martin-L\"of type theory (MLTT) \cite{martin-lof-1972}, with examples including the natural numbers, lists, and finite sets. In order for our inductive definitions to `make sense', meaning their induction principle can be properly expressed without leading to inconsistencies (see \cite[Section~5.6]{hott_book}),
we need to impose conditions on the general form of a constructor. The condition we would like to impose on our inductive definitions is that they are \textit{strictly positive}. Dual to inductive types is the notion of a coinductive type, i.e.\ a type defined by a list of destructors. While inductive types are described by the different ways we can construct them, coinductive types are described by the ways we can break them apart. Coinductive types are typically infinite structures, and examples include the conatural numbers and streams. As is the case for inductive definitions, coinductive definitions also ought to be strictly positive in order to avoid inconsistencies.\footnote{To be precise, in both cases we mean that the type's signature functor should be strictly positive.}
To ensure our type systems only admit such types, we seek a semantic description of strict positivity --- this is precisely what containers provide.

The theory of containers \cite{alt_fossacs_2003, alt_icalp_2004, abbott_tcs_2005, abbott_phd} (also referred to as polynomial functors in the literature \cite{poly_functors_hyland_gambino}) was developed to capture the concept of strictly positive data types in programming, and has been very useful in providing semantics for inductive types and inductive families. The original development of containers \cite{alt_fossacs_2003, alt_icalp_2004, abbott_tcs_2005, abbott_phd} uses a categorical language, where containers are presented as constructions in the internal language of locally cartesian closed categories (LCCCs) with disjoint coproducts and W-types (so-called Martin-L\"of categories), and a standard set-theoretic metatheory is used. Abbott et al.\ claim in \cite[Section 2.1]{abbott_tcs_2005} that these developments can also be interpreted as constructions in extensional MLTT. While Seely, Hofmann, and others show that this translation can be done in principle \cite{seely-1984, hofmann-1994}, the construction in type theory is not actually carried out.

In this paper, we remedy this by providing a formalisation of existing results by Abbott et al.~\cite{abbott_tcs_2005}. While the aforementioned paper implicitly assumes uniqueness of identity proofs (UIP) throughout, we do not assume this principle for any of the types involved in our formalisation. More precisely, the main contributions of this paper are:
\begin{itemize}
\item the formalisation of the results stating `container functors preserve initial algebras'  and `container functors preserve terminal coalgebras' (\cref{thm-least-fp,thm-greatest-fp}), and
\item  the generalisation of these results by proving them not in the category of sets, but in the wild category of types.
\end{itemize}\vspace{-6.5pt}

The formalisation is written in \cubicalagda~\cite{cubical_agda}, an extension of the \agda proof assistant which implements (a cubical~\cite{CCHM18} flavour of) homotopy type theory (HoTT). Although a lot of the interest around \cubicalagda is arguably due to its native support for the univalence axiom, our main motivation for using it is due to its treatment of equality as a path type, which restores the symmetry between inductive and coinductive reasoning in \agda. 
In intensional type theory, and therefore in vanilla \agda, working with inductive types is facilitated by using pattern matching and structural recursion. 
For coinductive types, while we do have copattern matching and some guarded corecursion, we cannot get very far when working in this setting. 
In particular, many equalities on coinductive types are impossible to prove in much the same way that equalities on function types cannot be proved: this requires some form of extensionality. Fortunately, \cubicalagda makes function extensionality provable, and many of the equalities on coinductive types that were previously impossible in vanilla \agda also become provable. With our formalisation, we contribute to the \agdaCubical library~\cite{The_Agda_Community_Cubical_Agda_Library_2025}, and hope to demonstrate that the practical developments brought to MLTT by cubical type theory extend beyond just computational univalence. 

Our formalisation is available
at \cite{code} in a fork of the \agdaCubical library. We have type-checked it using
versions $2.6.4.3$ and $2.6.5$ of \agda. We have taken
some liberties concerning the typesetting in this paper and,
although it should still be easy to follow, the syntax of the formalisation may
differ slightly from the paper.

The paper is organised as follows. In \cref{sec-background}, we give a brief overview of \agda and \cubicalagda, W- and M-types, and some container theory. In \cref{sec-settingup} we motivate and develop the constructions to be used in the main proofs, which are then given in \cref{sec-fixedpoints}. Lastly, we conclude in \cref{sec-conclusions} with some related work and future avenues of study.

\section{Background}
\label{sec-background}
In this section, we present some background material that aids in understanding the rest of the paper. We start by giving an overview of \agda and \cubicalagda concepts that will be used throughout the paper.  We then review \agdablue{W}- and \agdablue{M}-types, as well as the theory of containers adapted to wild categories.

\subsection{Agda and Cubical Agda}
\agda is a dependently typed functional programming language and proof assistant. It supports inductive data types, e.g.\ the unit type, the empty type, the natural numbers\\
\begin{minipage}[t]{0.32\textwidth}
\minipageskip
\ExecuteMetaData[agda/latex/Code.tex]{unit}
\end{minipage}
\begin{minipage}[t]{0.32\textwidth}
\minipageskip
\ExecuteMetaData[agda/latex/Code.tex]{empty}
\end{minipage}
\begin{minipage}[t]{0.33\textwidth}
\minipageskip
\ExecuteMetaData[agda/latex/Code.tex]{Nat}
\end{minipage}\\
and record types, e.g.\ $\Sigma$-types and (coinductive) streams.\\
\begin{minipage}[t]{0.55\textwidth}
\minipageskip
\ExecuteMetaData[agda/latex/Code.tex]{Sigma}
\end{minipage}
\begin{minipage}[t]{0.50\textwidth}
\minipageskip
\ExecuteMetaData[agda/latex/Code.tex]{stream}
\end{minipage}
\agda supports pattern matching on inductive data types, and dually supports copattern matching on record and coinductive data types, as shown below.\\
\begin{minipage}[t]{0.50\textwidth}
\minipageskip
\ExecuteMetaData[agda/latex/Code.tex]{even}
\end{minipage}
\begin{minipage}[t]{0.50\textwidth}
\minipageskip
\ExecuteMetaData[agda/latex/Code.tex]{from}
\end{minipage}\\
We note that \agda uses the syntax $(a : A) \to B\;a$ rather
than $\Pi_{a:A} B\;a$; we will use both notations interchangeably throughout the paper.
We also remark that the syntax $\{a : A\} \to B\;a$ is
available in \agda and denotes the same construction but with $a$ an
implicit argument.

\cubicalagda \cite{cubical_agda} extends \agda with primitives from cubical type theory~\cite{CCHM18, hits-2018}. While the original motivation behind this extension was arguably to allow for native support for Voevodsky's univalence axiom~\cite{Voevodsky10cmu} and higher inductive types~\cite{LumsdaineShulman19}, we are primarily interested in \cubicalagda's representation of equality. The equality type in \cubicalagda, also called the \emph{path type}, restores the equivalence between bisimilarity and equality for
coinductive types. In order to explain how, let us briefly introduce
some of the elementary machinery of \cubicalagda.

The main novelty of \cubicalagda is the addition of an interval
(pre-)type \agdablue{I}. This type has two terms
$\agdagreen{i0},\agdagreen{i1}:\agdablue{I}$ denoting the endpoints of
the interval. It comes equipped with operations, such as
a `reversal' operation
$\agdablue{$\sim$} : \agdablue{I}\to\agdablue{I}$,
which allow us to internalise
the usual homotopical notion of a path and take that as our definition
of equality. Indeed, an equality $p$ between two points $x,y : A$,
denoted $p : x \;\agdablue{$\equiv$}\;y$, is a path in $A$ between $x$
and $y$, i.e.\ a function $p : \agdablue{I} \to A$ such that
$p \;\agdagreen{i0}$ is definitionally equal to $x$ and
$p \;\agdagreen{i1}$ is definitionally equal to $y$. To showcase some
elementary constructions of paths in \cubicalagda, consider e.g.\ the
constructions/proofs of reflexivity and symmetry.\\
\begin{minipage}[t]{0.45\textwidth}
\minipageskip
\ExecuteMetaData[agda/latex/Code.tex]{refl}
\end{minipage}
\begin{minipage}[t]{0.50\textwidth}
\minipageskip
\ExecuteMetaData[agda/latex/Code.tex]{sym}
\end{minipage}\\
\indent As in plain MLTT, there is also a notion of dependent path in \cubicalagda expressed by the primitive type called \agdablue{PathP}. This type generalises the path type by considering dependent functions $(i:\agdablue{I}) \to A\;i$ instead of only non-dependent ones $\agdablue{I} \to A$.
In this paper, we adopt an informal approach and write $a\; \agdablue{$\approxeq$} \;b$ for the statement that `$a:A$ is equal to $b:B$ modulo some path of types $p : A \;\agdablue{$\equiv$}\; B$'. The definition of the path of types will often be omitted from the main text, as it almost always can be automatically inferred from context. However, it will be included in a separate text box for the interested reader, although it can generally be ignored by the reader focusing on the broader picture.

On a related note, we also mention here that the \cubicalagda primitives allow us to define $\agdablue{transport} : \agdaeq{A}{B} \to A \to B$. As usual we have that, for any $p : \agdaeq{A}{B}$, the type of dependent paths $a\;\agdablue{$\approxeq$}\;b$ modulo $p$ is equivalent to the type $\agdaeq{\agdablue{transport}\;p\;a}{b}$.

One of the key advantages of \cubicalagda's treatment of equality is
that it renders function extensionality a triviality:
\ExecuteMetaData[agda/latex/Code.tex]{funext}
A consequence of this is that we can use the types $\agdaeq{f}{g}$ and
$(x : A) \to \agdaeq{f\;x}{g\;x}$ interchangeably without having to
worry about introducing any bureaucracy when moving from one to the
other; in \cubicalagda, equality of functions \emph{is by definition}
pointwise equality.  In a similar way, and especially important for
us, the equality type of a coinductive type \emph{is}
bisimulation, modulo copattern matching. We can define \agdablue{id} below 
by first introducing the path variable $i$, after which we are required to construct 
an element of $\agdablue{Stream}\,A$ matching the endpoints $xs$ at $i=0$ and $ys$ at $i=1$,
which we do using copattern matching. 
In particular, we can show that \agdablue{id} is an equivalence, which then allows us to prove \agdablue{path≃bisim} stated below \cite{cubical_agda}, where \agdablue{≃} denotes equivalence (and if we assume univalence, which we do not require for this paper, we can even show that $(xs\, \agdablue{$\equiv$}\, ys)\, \agdablue{$\equiv$}\, (xs\, \agdablue{$\approx$}\, ys)$). To the best of our knowledge, this feature is unique to \cubicalagda.\\
\begin{minipage}[t]{0.45\textwidth}
\minipageskip
\ExecuteMetaData[agda/latex/Code.tex]{bisim}
\end{minipage}
\begin{minipage}[t]{0.50\textwidth}
\minipageskip
\ExecuteMetaData[agda/latex/Code.tex]{streamid}
\ExecuteMetaData[agda/latex/Code.tex]{pathbisim}
\end{minipage}

\vspace{-5pt}
\subsection{The W-type and the M-type} \label{sec:w-m}

The type \agdablue{W}, due to Martin-L\"of \cite{martin-lof-1982, martin-lof-1984}, is the type of well-founded, labelled trees. A tree of type \agdablue{W} can be infinitely branching, but every path in the tree is finite. \agdablue{W} takes two parameters $S:\agdablue{Type}$ and $P\colon S\to\agdablue{Type}$. We think of $S$ as the type of shapes of the tree, and for a given shape $s:S$, the tree has $(P\; s)$-many positions. The key property of \agdablue{W} is that it is the universal type for strictly positive inductive types, i.e.\ any strictly positive inductive type can be expressed using \agdablue{W}. 

\ExecuteMetaData[agda/latex/Code.tex]{W}

\vspace{-5pt}
\begin{example}\label{example1}
	We encode the type of natural numbers $\agdablue{$\mathbb{N}$}$ by defining \agdablue{S} and \agdablue{P} as below (where $A\;\agdablue{$\uplus$}\; B$ is the sum type of $A$ and $B$ with constructors \agdagreen{inl} and \agdagreen{inr}).
	 \begin{align*}
	      \agdablue{S} &= \agdablue{$\top$}\; \agdablue{$\uplus$}\; \agdablue{$\top$} \\
	      \agdablue{P}\, (\agdagreen{inl}\; \_) &= \agdablue{$\bot$}\\
	      \agdablue{P}\, (\agdagreen{inr}\; \_) &= \agdablue{$\top$}
	 \end{align*} 
	\agdablue{S} encodes the possible constructors we can choose (\agdagreen{inl} is for \agdagreen{zero}, \agdagreen{inr} is for \agdagreen{succ}), and \agdablue{P} encodes the number of subtrees (or recursive arguments) each choice of \agdablue{S} has. Thus $\agdablue{W S P}$ encodes $\agdablue{$\mathbb{N}$}$. For example, \agdagreen{zero} and \agdagreen{succ zero}\ 
	are represented by the trees below respectively.
	
	\begin{minipage}[t]{0.45\linewidth}
		\adjustbox{scale=0.8,center}{
			\begin{tikzcd}
				& \agdagreen{inl tt}
			\end{tikzcd}
		}
	\end{minipage}
	\begin{minipage}[t]{0.45\linewidth}
		\adjustbox{scale=0.8,center}{
			\begin{tikzcd}
				& \agdagreen{inr tt}\arrow{dr}{} &\\
				& & \agdagreen{inl tt}
			\end{tikzcd}
		}
	\end{minipage}
\end{example}

The type \agdablue{M}, first studied by Abbott et al.\ \cite{abbott_tcs_2005} and van den Berg and De Marchi \cite{m-types}, is the type of non-well-founded, labelled trees. A tree of type \agdablue{M} can have both finite and infinite paths. \agdablue{M} takes two parameters $S:\agdablue{Type}$ and $P\colon S\to\agdablue{Type}$, and we think of them similarly as for \agdablue{W}. Dually to \agdablue{W}, \agdablue{M} is the universal type for strictly positive \emph{coinductive} types.

\ExecuteMetaData[agda/latex/Code.tex]{M}

\begin{example} \label{ex:conats}
	If we define \agdablue{S} and \agdablue{P} as in \cref{example1}, then $\agdablue{M S P}$ is an encoding of the conatural numbers $\agdablue{$\mathbb{N}\infty$}$. 
\ExecuteMetaData[agda/latex/Code.tex]{Ninfty}
	Apart from having all the natural numbers as its elements, $\agdablue{$\mathbb{N}\infty$}$ also has an `infinite number' whose predecessor is itself. This number 
	is represented by the infinite tree shown below. This $\agdablue{M}$-tree clearly has an infinite path.\vspace{5pt}
	
	\begin{minipage}[t]{0.45\linewidth}
		\adjustbox{scale=0.8,center}{
			\begin{tikzcd}
				& \agdagreen{inr tt}\arrow[loop right]
			\end{tikzcd}
		}
	\end{minipage}
\end{example}

We will see in \cref{sec-fixedpoints} that it is useful to have an
explicit account of the coinduction principle of \agdablue{M}, which
states that any two $m_0 , m_1 : \agdablue{M} \,S\,Q$ that can be
related by a bisimulation are equal. 
In \cubicalagda, we can define what it means for a relation $R$ on \agdablue{M} to be a bisimulation using \agdablue{M-R} below --- $R$ has to relate two elements $m_0$ and $m_1$ of \agdablue{M} whenever their \agdapink{shape}s are equal and their \agdapink{pos}itions are related by $R$.
\ExecuteMetaData[agda/latex/Code.tex]{M-R}
\ \vspace{-.7cm} \\
\mycomment{Above, $q$-$eq$ is a dependent path over the path of types $(λ\; i \;→\; Q \;(\agdapink{s-eq}\, i))$}\\
\ \vspace{-.3cm} \\
We can then prove the coinduction principle using interval abstraction and
copattern matching.
\ExecuteMetaData[agda/latex/Code.tex]{MCoind}
Above, $\AgdaFunction{$q_0$}$, $\AgdaFunction{$q_1$}$, and
$\AgdaFunction{$q_2$}$ are all of the form
$\AgdaFunction{transport} \,\AgdaFunction{...}\, q$. The constructions
of these transports use some rather technical cube algebra; we omit
the details and refer the interested reader to the formalisation.

\subsection{Containers}
\label{subsec-containers}

The container and container functor definitions in this section are adapted from \cite{abbott_tcs_2005} to the wild category of types \agdablue{Type}.\footnote{We note that we use \agdablue{Type} to refer to both the wild category of types as well as \cubicalagda's universe of types.} A wild category is a precategory as in the HoTT book \cite{hott_book}, except the type of morphisms need not be an h-set (i.e.\ it need not satisfy UIP); this is also called a precoherent higher category in \cite{josh-chen-2025}. We observe that \agdablue{Type} has triangle and pentagon coherators, and is therefore a 2-coherent category in the sense of  \cite{josh-chen-2025}, and moreover its coherators are trivial \cite[Example 2.2.5]{josh-chen-2025}.
The definitions of container functor algebras are standard category theory definitions also adapted to this setting, with the initial algebra definition being motivated by \cite[Definition 5]{kraus-raumer-2021}.

\begin{definition}\label{def:container}
 	A \underline{(unary) container} is given by a pair of types $S : \agdablue{Type}$ and $P : S \to\agdablue{Type}$, which we write as $\cont{S}{P}$. 
\end{definition}

In practice, many data types are parameterised by one or more types. For example, $\agdablue{List}\;(A~:~\agdablue{Type}):\agdablue{Type}$ and $\agdablue{Vec}\;(A:\agdablue{Type}) \colon \agdablue{$\mathbb{N}$}\to\agdablue{Type}$ are both parameterised by the type $A$ of data to be stored in them. In order to be able to reason about such parameterised data types, as well as to construct fixed points of containers, we will need containers parameterised by some (potentially infinite) indexing type $I$. We call these $I$-ary containers.

\begin{definition}
 	An \underline{$I$-ary container} is given by a pair $S : \agdablue{Type}$ and $\mathbf{P} : I\to S \to \agdablue{Type}$, which we write as $\cont{S}{\mathbf{P}}$.
\end{definition}
Above, $\mathbf{P}$ can be thought of as $I$ families of families over $S$. We will sometimes write $P_0, P_1, \dots$ instead of $\mathbf{P}\; i_0, \mathbf{P}\; i_1, \dots$ to enumerate the families over $S$. 

\begin{example}\label{ex:list-N-Fin}
	Given a type $A$, the (unary) container representation of $\agdablue{List}\; A : \agdablue{Type}$ is given by $\cont{\agdablue{$\mathbb{N}$}}{\agdablue{Fin}}$.\footnote{The precise meaning of this is that $\contfuncX{\agdablue{$\mathbb{N}$}}{\agdablue{Fin}}{A} \cong \mu X. F_ {\agdablue{List}}(A, X)$, see \cref{example:list-param}.} The shape of a list is a natural number $n$ representing its length, and there are $n$-many positions for data to be stored in a list, represented by $\agdablue{Fin}\; n$, the type of finite sets of size $n$.
\end{example}
Unary containers are trivially $I$-ary containers when $I=\agdablue{$\top$}$, so henceforth in this section we will only consider $I$-ary containers.

To every container $\cont{S}{\mathbf{P}}$, we associate a wild functor which maps a family of types
$\mathbf{X} : I \to\agdablue{Type}$ to a choice of shape $s : S$, and for every $i:I$ and position $\mathbf{P}\; i\; s$ associated to $s$, a value of type $\mathbf{X}\; i$ to be stored at that position.

\begin{definition}\label{defn:cont-func}
 	The \underline{container functor} associated to an $I$-ary container $\cont{S}{\mathbf{P}}$ is the wild functor $\contfunc{S}{\mathbf{P}}\colon(I \to \agdablue{Type}) \to \agdablue{Type}$ with the following actions on objects and morphisms.\footnote{Here, $I\to\agdablue{Type}$ refers to the wild category of $I$-indexed families of types.}
 	\begin{itemize}
 		\item Given $\mathbf{X} : I\to\agdablue{Type}$, we define $\contfuncX{S}{\mathbf{P}}{\mathbf{X}}\coloneqq\underset{s:S}\nsum \pa[\Big]{\underset{i:I}\nprod\, \mathbf{P}\, i\, s \to \mathbf{X}\, i}$.
 		\item Given $\mathbf{X}, \mathbf{Y} : I\to\agdablue{Type}$, and a morphism $f\colon \underset{i:I}\nprod\  \mathbf{X}\, i \to \mathbf{Y}\, i$, we define
 		\[\contfuncX{S}{\mathbf{P}}{f}\,(s , g)\coloneqq (s , f\circ g)\]
 		for $s : S$ and $g\colon \underset{i:I}\nprod\ \mathbf{P}\, i\, s \to \mathbf{X}\, i$, where $f \circ g$ is composition in $I\to\agdablue{Type}$. \qedhere
 	\end{itemize}
\end{definition}

As a special case of \cref{defn:cont-func}, given an $(I+1)$-ary container $F=\cont{S}{\mathbf{R}}$, we will later need to write it in a way where we single out one component from it. We split $\mathbf{R}$ into $\mathbf{P}$ and $Q$ and write $F$ as having components $S : \agdablue{Type}, \mathbf{P}\colon I\to S\to \agdablue{Type}, Q\colon S\to\agdablue{Type}$, and use the notation $F = (\cont{S}{\mathbf{P}},Q)$. Given $\mathbf{X}\colon I\to \agdablue{Type}$ and $Y : \agdablue{Type}$, then $S, \mathbf{P},$ and $Q$ satisfy the below.
\begin{equation}\label{eqn:cont-func}
	\llbracket \cont{S}{\mathbf{P}},Q\rrbracket (\mathbf{X}, Y) = \sum_{s:S} \pa[\Big]{(i : I)\to  \mathbf{P}\, i\, s\to \mathbf{X}\, i} \times (Q\, s\to Y) 
\end{equation}

\begin{example}\label{example:list-param}
	The signature functor of \agdablue{List}, $F_{\agdablue{List}}\, (A, X) = \agdablue{$\top$} \uplus (A \times X)$,
	is isomorphic to the container functor $\contfuncX{\agdablue{S}}{\agdablue{P$_0$}, \agdablue{P$_1$}}{(A,X)}$ via the below definitions.
	
	\begin{minipage}[t]{0.32\linewidth}
		\begin{align*}
			\agdablue{S} &: \agdablue{Type}\\
			\agdablue{S} &\coloneqq \agdablue{$\top$}\; \agdablue{$\uplus$}\; \agdablue{$\top$}
		\end{align*}
	\end{minipage}
	\begin{minipage}[t]{0.32\linewidth}
		\begin{align*}
			&\agdablue{P$_0$}\colon \agdablue{S}\to \agdablue{Type}\\
			&\agdablue{P$_0$}\; (\agdagreen{inl tt}) \coloneqq \agdablue{$\bot$}\\
			&\agdablue{P$_0$}\; (\agdagreen{inr tt}) \coloneqq \agdablue{$\top$}
		\end{align*}
	\end{minipage}
	\begin{minipage}[t]{0.32\linewidth}
		\begin{align*}
			&\agdablue{P$_1$}\colon \agdablue{S}\to \agdablue{Type}\\
			&\agdablue{P$_1$}\; (\agdagreen{inl tt}) \coloneqq \agdablue{$\bot$}\\
			&\agdablue{P$_1$}\; (\agdagreen{inr tt}) \coloneqq \agdablue{$\top$}
		\end{align*}
	\end{minipage} \vspace{5pt}
	
	The type of shapes \agdablue{S} reflects the fact that there are two ways to construct a list, i.e.\ as either $\mathsf{nil}$ or $\mathsf{cons}$. \agdablue {P$_0$} defines positions for the parameter $A$ and \agdablue{P$_1$} defines positions for the recursive argument $X$. \cref{ex:list-N-Fin} corresponds to taking the least fixed point of this functor with respect to $X$.
\end{example}
\begin{example}
	Container functors allow us to view strictly positive types simply as memory locations in which data can be stored. The container functor associated to $\cont{\agdablue{$\mathbb{N}$}}{\agdablue{Fin}}$ allows us to represent concrete lists. The list of $\mathsf{Char}$s [`r', `e', `d'] is represented as $(3, (0~\mapsto~\text{`$r$'}; 1~\mapsto~\text{`$e$'};\\ 2~\mapsto~\text{`$d$'})) : \nsum_{n:\agdablue{$\mathbb{N}$}} (\agdablue{Fin}\,n \to\mathsf{Char})$. 
\end{example}

The two main results formalised in this paper concern the initial algebra and terminal coalgebra of a container functor. We define explicitly what we mean by these in the setting of wild categories and wild functors. We note that for (at least a naive definition of) a functor of wild categories $F\colon \cat{C}\to\cat{C}$, it is not in general the case that $F$-algebras form a wild category. However, this is the case for container functors. This follows from the properties of \agdablue{Type} we mentioned at the start of the section.

\begin{definition}\label{def:algebras}
	For a (unary) container functor $\llbracket F \rrbracket\colon \agdablue{Type}\to\agdablue{Type}$, the \underline{wild category of} \underline{$\llbracket F\rrbracket$-algebras}, denoted $\mathsf{Alg}_{\llbracket F \rrbracket}$, is defined as follows.
	\begin{itemize}
		\item Objects are \underline{algebras}: pairs $(X\colon\agdablue{Type}, \alpha\colon \llbracket F \rrbracket\, X\to X)$.\footnote{$X$ is sometimes called the carrier.}
		\item A morphism of algebras $(X, \alpha) \to (Y, \beta)$ is a function $f\colon X\to Y$ such that the following square commutes.
		\[
		\begin{tikzcd}
			\llbracket F \rrbracket\, X\arrow{r}{\llbracket F\rrbracket\; f}\arrow[swap]{d}{\alpha} & \llbracket F \rrbracket\, Y\arrow{d}{\beta}\\
			X\arrow[swap]{r}{f} & Y
		\end{tikzcd}
		\]
	\end{itemize}
\end{definition}

\begin{definition}
	The \underline{initial algebra} of a (unary) container functor $\llbracket F \rrbracket\colon \agdablue{Type}\to\agdablue{Type}$ is an algebra $(I, \iota)$ such that for every other algebra $(X, \alpha)$, $\mathsf{Alg}_{\llbracket F \rrbracket} ((I, \iota), (X, \alpha))$ is contractible.
\end{definition}

The \underline{wild category of $\llbracket F \rrbracket$-coalgebras} $\mathsf{Coalg_{\llbracket F \rrbracket}}$ is dual to \cref{def:algebras}, where the objects, \underline{coalgebras}, are defined as pairs $(X\colon\agdablue{Type}, \alpha\colon X\to \llbracket F \rrbracket\, X)$. The terminal coalgebra is then an object $(T, \tau)$ such that for every other coalgebra $(X, \alpha), \mathsf{Coalg}_{\llbracket F \rrbracket} ((X, \alpha), (T, \tau))$ is contractible.

\section{Setting up}
\label{sec-settingup}
In this section, we state precisely what it is that we want to prove and start attacking the problem. We construct a candidate initial algebra and terminal coalgebra for a general container functor, which in the following section we prove to be correct. We also discuss a generalised induction principle for the inductive family \agdablue{Pos} of finite paths in a tree.

\subsection{Calculation of the initial algebra and terminal coalgebra}

Given a container functor $\llbracket F\rrbracket \colon (I + 1 \to \agdablue{Type})\to \agdablue{Type}$, which we write as $F = (\cont{S}{\mathbf{P}}, Q)$, we need to specify container functors
\begin{align*}
\contfunc{A_\mu}{\mathbf{B}_\mu}&\colon (I \to\agdablue{Type})\to\agdablue{Type}\\
\contfunc{A_\nu}{\mathbf{B}_\nu}&\colon (I \to\agdablue{Type})\to\agdablue{Type}
\end{align*}
such that
\begin{align*}
\contfuncX{A_\mu}{\mathbf{B_\mu}}{\mathbf{X}} &\cong \mu Y.\llbracket F\rrbracket (\mathbf{X},Y),\\
\contfuncX{A_\nu}{\mathbf{B_\nu}}{\mathbf{X}} &\cong \nu Y.\llbracket F\rrbracket (\mathbf{X},Y).
\end{align*}
Above, and for the remainder of the paper, $\cong$ is used to denote an equivalence of types.\footnote{In HoTT, there are several different notions of type equivalence. In our formalisation, we primarily use a definition in terms of \emph{quasi-inverses}~\cite[Definition 2.4.6]{hott_book}, i.e.\ a function with an explicit inverse. All statements in this paper are independent of this particular choice and can be read with any other reasonable notion of equivalence in mind.} The symbols
$\mu$ and $\nu$ denote partial operators taking a wild functor to the carrier of its initial algebra or terminal coalgebra respectively, if they exist. The notation $\mu Y.\llbracket F\rrbracket (\mathbf{X}, Y)$ is shorthand for (the carrier of) the initial algebra of the wild functor $G$ defined by $G\, Y \coloneqq \llbracket F\rrbracket (\mathbf{X}, Y)$, and similarly for $\nu$. 

We now illustrate how we calculate containers $(\cont{A_\mu}{\mathbf{B_\mu}})$ and $(\cont{A_\nu}{\mathbf{B_\nu}})$ in $I$ parameters to make the above isomorphisms hold.
Calculating $A_\mu$ and $A_\nu$ is straightforward. If we set $\mathbf{X} = \agdablue{$\top$}$ in the above, we get
\begin{align*}
A_\mu &\cong \contfuncX{A_\mu}{\mathbf{B}_\mu}{\agdablue{$\top$}} \cong \mu Y. \llbracket F\rrbracket(\agdablue{$\top$},Y) \cong \mu Y. \sum_{s:S} (Q\; s\to Y) \cong \mu Y. \contfuncX{S}{Q}{Y} \cong \agdablue{W}\,S\,Q\\
A_\nu &\cong \contfuncX{A_\nu}{\mathbf{B}_\nu}{\agdablue{$\top$}}\cong \nu Y. \llbracket F\rrbracket(\agdablue{$\top$},Y) \cong \nu Y. \sum_{s:S} (Q\; s\to Y) \cong \nu Y. \contfuncX{S}{Q}{Y} \cong \agdablue{M}\,S\,Q.
\end{align*}  
The last step follows from the fact that the least (resp.\ greatest) fixed point of the container functor in one variable $\contfunc{S}{Q}$ is $\agdablue{W}\,S\,Q$ ($\agdablue{M}\,S\,Q$), the \agdablue{W}-type (\agdablue{M}-type) with shapes $S$ and positions $Q$.

Calculating $\mathbf{B_\mu}\colon I\to \agdablue{W}\,S\,Q\to\agdablue{Type}$ and $\mathbf{B_\nu}\colon I\to \agdablue{M}\,S\,Q\to\agdablue{Type}$ is a bit more involved. Our reasoning below applies to both $\agdablue{W}\,S\, Q$ and $\agdablue{M}\,S\,Q$, so we consider any fixed point $\phi$ of the container functor $\contfunc{S}{Q}$ and construct $\mathbf{B}\colon I\to \phi\to \agdablue{Type}$. Being a fixed point of $\contfunc{S}{Q}$ means that $\phi$ consists of a carrier $\agdapink{C}:\agdablue{Type}$ together with an isomorphism, $\agdapink{$\chi$}\colon \contfuncX{S}{Q}{\agdapink{C}}\cong \agdapink{C}$.
\ExecuteMetaData[agda/latex/Code.tex]{FixedPoint}
In particular, we have \agdablue{WAlg : FixedPoint} whose carrier is $\agdablue{W}\,S\,Q$ and \agdablue{MAlg : FixedPoint} whose carrier is $\agdablue{M}\,S\,Q$ (for the \agdapink{$\chi$} components, we refer the reader to the formalisation).

If $\contfunc{\agdapink{C}}{\mathbf{B}}{\mathbf{X}}$ is to be a fixed point of $\llbracket F \rrbracket (\mathbf{X}, -)$, by Lambek's theorem \cite{lambek_1968}, the following isomorphism is induced.
\begin{equation}
\llbracket F\rrbracket (\mathbf{X}, \contfuncX{\agdapink{C}}{\mathbf{B}}{\mathbf{X}})\cong \contfuncX{\agdapink{C}}{\mathbf{B}}{\mathbf{X}}
\label{iso}
\end{equation}  
By massaging the left hand side of this isomorphism, we can write it as a container functor in terms of only $\mathbf{X}$.
\begin{align*}
  & \phantom{=} \sum_{s:S} \pa[\Big]{\pa[\Big]{\prod_{i}(\mathbf{P}\; i\; s\to \mathbf{X}\; i)} \times (Q\; s\to\contfuncX{\agdapink{C}}{\mathbf{B}}{\mathbf{X})}} && 
  \begin{tabular}{p{2.5cm}}
  	\small
  	\hspace{-.45cm}using \cref{eqn:cont-func}
  \end{tabular}\\
  & = \sum_{s:S} \pa[\Big]{\pa[\Big]{\prod_{i}(\mathbf{P}\; i\; s\to \mathbf{X}\; i)} \times \pa[\Big]{Q\; s\to \sum_{c:\agdapink{C}} \pa[\Big]{\prod_{i} (\mathbf{B}\; i\; c\to \mathbf{X}\; i)}}} &&
    \begin{tabular}{p{2.5cm}}
      \hspace{-.45cm}definition of $\danascott$
    \end{tabular}\\
    & \begin{aligned}
        \cong \sum_{s:S} \pa[\Big]{&\pa[\Big]{\prod_{i}(\mathbf{P}\; i\; s\to \mathbf{X}\; i)} \times
          {\sum_{f\colon Q\; s\to \agdapink{C}} \pa[\Big]{\prod_{q:Q\; s}\prod_{i} (\mathbf{B}\; i\; (f\; q) \to \mathbf{X}\; i)}}}
        \end{aligned} &&
    \begin{tabular}{p{2.5cm}}
      \hspace{-.45cm}distributivity of $\Pi$\\
      \hspace{-.45cm}over $\Sigma$
    \end{tabular}\\
   & \cong \sum_{(s,f) \colon  \sum_{s:S} (Q\; s\to \agdapink{C})} \pa[\Big]{\prod_{i} \pa[\Big]{\mathbf{P}\; i\; s + \sum_{q : Q\; s} (\mathbf{B}\; i\; (f\; q))} \to \mathbf{X}\; i} &&
     \begin{tabular}{p{2.5cm}}
       \hspace{-.75cm}commutativity of $\times$\\
       \hspace{-.75cm}and $(A\, \agdablue{$\uplus$}\, B) \to C\,\cong$ \\
       \hspace{-.75cm}$(A \to C) \times (B \to C)$
     \end{tabular}\\
  & = \bigcontfuncX{\sum_{s:S} (f\colon Q\; s\to \agdapink{C})}{\pa[\Big]{\lambda\; i.\; \mathbf{P}\; i\; s + \sum_{q : Q\; s} (\mathbf{B}\; i\; (f\; q))}}{\mathbf{X}} &&
     \begin{tabular}[t]{p{2.5cm}}
       \hspace{-.75cm}definition of $\danascott$
     \end{tabular}
\end{align*}

The induced isomorphism \eqref{iso} can then be written as
\[\bigcontfuncX{\sum_{s:S} (f\colon Q\; s\to \agdapink{C})}{\pa[\Big]{\lambda\; i.\; \mathbf{P}\; i\; s + \sum_{q : Q\; s} \mathbf{B}\; i\; (f\; q)}}{\mathbf{X}} \cong \contfuncX{\agdapink{C}}{\mathbf{B}}{\mathbf{X}}.\]

We already have the isomorphism $\agdapink{$\chi$}\colon \sum_{s:S} (f\colon Q\; s\to \agdapink{C})\cong \agdapink{C}$ on shapes.  We will also need the below isomorphism on positions for $i:I$ and $c:\agdapink{C}$. 
We call \agdablue{$\xi$} the map in the inverse direction of $\agdapink{$\chi$}$ and use the notation ($\phi\ \agdablue{$\xi_0$}$) and ($\phi\  \agdablue{$\xi_1$})$ for its first and second projections, so that  $(\phi\ \agdablue{$\xi_0$})\; c: S$ and $(\phi\ \agdablue{$\xi_1$})\; c : Q\; ((\phi\ \agdablue{$\xi_0$})\; c) \to \agdapink{C}$.
\[\pa[\Big]{\mathbf{P}\; i\; ((\phi\; \agdablue{$\xi_0$})\; c) + \sum_{q:Q\; ((\phi\; \agdablue{$\xi_0$})\; c)} \mathbf{B}\; i\; ((\phi\; \agdablue{${\xi}_1$})\; c\; q)} \cong \mathbf{B}\; i\; c\]
We use this as our definition of $\mathbf{B}$, which we hereafter call \agdablue{Pos}, as an inductive family over \agdapink{C}. In our code, \agdablue{Pos} is also parameterised by a fixed point $\phi$.
\ExecuteMetaData[agda/latex/Code.tex]{Pos}
It turns out that \agdablue{Pos} works for both cases: we set $\mathbf{B}_\mu = \agdablue{Pos WAlg}$ and $\mathbf{B}_\nu = \agdablue{Pos MAlg}$. It is not immediately clear that choosing $\mathbf{B}_\nu$ to be an inductive (and not coinductive) family over $\agdablue{M}\,S\,Q$ would be the right choice in the coinductive case, so we explain why this works in more detail. Intuitively, we can think of \agdablue{Pos} as the type of finite paths through a \agdablue{W} or \agdablue{M} tree.
To see this more clearly, we look at \agdablue{Pos} specified to $\phi = \agdablue {MAlg}, I = \agdablue{$\top$},$ and $\mathbf{P}\ \agdagreen{tt}\ s \coloneqq \agdablue{$\top$}$ (which implies we only consider unary containers), which would be equivalent to \agdablue{PosM} below.
\ExecuteMetaData[agda/latex/Code.tex]{PosM}
Now, as an example, recall from \cref{sec:w-m} the \agdablue{M} trees encoding 0, 1, and $\infty$ respectively of type $\agdablue{$\mathbb{N}\infty$}$.

\begin{minipage}[t]{0.32\linewidth}
		\adjustbox{scale=0.8,center}{
				\begin{tikzcd}
						& \agdagreen{inl tt}
					\end{tikzcd}
			}
	\end{minipage}
\begin{minipage}[t]{0.32\linewidth}
		\adjustbox{scale=0.8,center}{
				\begin{tikzcd}
						& \agdagreen{inr tt}\arrow{dr}{} &\\
						& & \agdagreen{inl tt}
					\end{tikzcd}
			}
	\end{minipage}
\begin{minipage}[t]{0.32\linewidth}
		\adjustbox{scale=0.8,center}{
				\begin{tikzcd}
						& \agdagreen{inr tt}\arrow[loop right]
					\end{tikzcd}
			}
	\end{minipage}
Now we look at the elements of \agdablue{PosM}\ (\agdablue{M S P}), where recall $\agdablue{M S P}\cong \agdablue{$\mathbb{N}\infty$}$, given the different elements $0, 1,$ and $\infty$ of $\agdablue{$\mathbb{N}\infty$}$. For the first tree (encoding 0), \agdablue{PosM} would consist solely of the element \agdagreen{here}, because we cannot construct anything via \agdagreen{below}, since $\agdablue{P}\;(\agdagreen{inl tt})$ is empty. 
For the second tree (encoding 1), \agdablue{PosM} consists of \agdagreen{here} and \agdagreen{below here}, since $\agdablue{P}\;(\agdagreen{inr tt})$ is now \agdablue{$\top$}. 
For the third tree (encoding $\infty$), \agdablue{PosM} consists of \agdagreen{here}, \agdagreen{below here}, \agdagreen{below} (\agdagreen{below here}), and so on, ad infinitum. Although \agdablue{M} trees can have infinite paths, like in the third case, any position (i.e.\ where data is stored in the tree, even though this example does not involve payloads) is obtained via a finite path, and since \agdablue{PosM} encodes exactly the finite paths, it is precisely what is required. We verify this is actually the case in \cref{sec-fixedpoints}. 

\subsection{Generalised induction principle for \agdablue{Pos}}

We take the opportunity to mention the induction
principle for \agdablue{Pos}, which will come in useful later. In general, given a fixed point $\phi$, an
index $i : I$, and a family of types $A : (c
: \phi \;.\agdapink{C}) \to \agdablue{Pos}\;\phi\;i\;c \to \agdablue{Type}$
equipped with
\begin{itemize}
\item $h : \{c : \phi \;.\agdapink{C}\} \;(p : P \;i\;((\phi\ \agdablue{$\xi_0$})\; c)) \to A\;c\; (\agdagreen{here}\;p)$
\item $b : \{c : \phi \;.\agdapink{C}\} \;(q : Q \;((\phi\ \agdablue{$\xi_0$})\; c))\; (p : \agdablue{Pos}\;\phi\;i\;((\phi\ \agdablue{$\xi_1$})\; c\;q)) \to A\;((\phi\ \agdablue{$\xi_1$})\; c\;q) \;p\; \to\\ \phantom{b : } A \;c\; (\agdagreen{below}\;q\;p)$
\end{itemize}
induces, in the obvious way, a dependent function $(c : \phi \;.\agdapink{C})\;
(p : \agdablue{Pos}\;\phi\;i\;c ) \to A \;c\;p$. In \cubicalagda, this
is precisely the induction principle we get from performing a
standard pattern matching.
In practice, however, this induction principle is quite limited. The
primary difficulty we run into is in the case where $A$ is only defined over
$(d:D)$ and $\agdablue{Pos}\;\phi\;i\;(f\;d)$ for some fixed function
$f : D \to \phi \;.\agdapink{C}$. In this case, the induction
principle above does not apply since $A$ is not defined over all of
$\phi\;.\agdapink{C}$ (this is entirely analogous to how path
induction does not apply to paths with fixed endpoints). There are, of
course, special cases when the induction principle is still
applicable: for instance, when $f$ is a retraction. In fact, we only
need $f$ to satisfy a weaker property, namely the following.
\vspace{.3cm}
\\
\begin{minipage}{0.5\linewidth}
\begin{definition}\label{def-pos-ind}
Given a fixed point $\phi$, a function $f :
D \to \phi \;.\agdapink{C}$ is called a \emph{$\phi$-retraction}
if for any $d:D$, the lift $\widehat{f}_d$ in the diagram to the right
exists.
\end{definition}
\end{minipage}
\begin{minipage}[r]{0.5\linewidth}
\vspace{-.6cm}
\[
\begin{tikzcd}[column sep = 4em, ampersand replacement=\&]
D \\
Q \;((\phi\; \agdablue{$\xi_0$})\; (f \;d)) \&\& \agdapink{C}
\arrow["f", from=1-1, to=2-3]
\arrow["{\widehat{f}_d}", dotted, from=2-1, to=1-1]
\arrow["(\phi\; \agdablue{$\xi_1$})\; (f\;d)"', from=2-1, to=2-3]
\end{tikzcd}
\]
\end{minipage}
\begin{lemma}[Generalised \agdablue{Pos} induction]\label{lem-pos-ind}
Let $\phi$ be a fixed point, $i:I$ an index, and $f :
D \to \phi \;.\agdapink{C}$ a $\phi$-retraction. Let $A : (d :
D) \to \agdablue{Pos}\;\phi\;i\;(f\;d) \to \agdablue{Type}$ be a
dependent type equipped with
\begin{itemize}
\item $h : \{d : D\} \;(p : P \;i\;((\phi\; \agdablue{$\xi_0$})\; (f\;d)))\to A\;d\; (\agdagreen{here}\;p)$
\item $b : \{d : D\} \;(q : Q \;((\phi\; \agdablue{$\xi_0$})\; (f\;d)))\; (p : \agdablue{Pos}\;\phi\;i\;((\phi\; \agdablue{$\xi_1$})\; (f\; d)\;q)) \to  A \;(\widehat{f}_d\;q) \;\widehat{p}\; \to \\ \phantom{b : } A \;d\; (\agdagreen{below}\;q\;p)$
\end{itemize}
where $\widehat{p}$ is $p$ transported along the witness of the fact
the diagram in \cref{def-pos-ind} commutes. This data induces a dependent function $(d : D) \;(p : \agdablue{Pos}\;\phi\;i\;(f\;d)) \to A\;d\;p$.

\end{lemma}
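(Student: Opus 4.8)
The plan is to derive \cref{lem-pos-ind} from the \emph{ordinary} pattern-matching induction principle for \agdablue{Pos} by generalising the fixed index $f\;d$ to an arbitrary element of $\phi\;.\agdapink{C}$ together with a path identifying it with $f\;d$. This is exactly the manoeuvre that turns the Martin-L\"of eliminator into based path induction, and, as the discussion preceding the statement already hints, the $\phi$-retraction hypothesis is precisely the structure that is needed to carry the \agdagreen{below} step of the recursion through once the endpoint has been unfixed.

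Concretely, I would first define a family over \emph{all} of $\phi\;.\agdapink{C}$, namely $A^\star : (c : \phi\;.\agdapink{C})\,(p : \agdablue{Pos}\;\phi\;i\;c)\to\agdablue{Type}$, by
\[A^\star\;c\;p \;\coloneqq\; (d : D)\;(w : \agdaeq{f\;d}{c}) \to A\;d\;(\agdablue{subst}\,(\agdablue{Pos}\;\phi\;i)\,(\agdablue{sym}\;w)\;p).\]
Ordinary \agdablue{Pos}-induction applied to $A^\star$ yields a function $g^\star : (c : \phi\;.\agdapink{C})\,(p : \agdablue{Pos}\;\phi\;i\;c)\to A^\star\;c\;p$, from which the desired dependent function is recovered as $\lambda\,d\,p.\;g^\star\,(f\;d)\,p\,d\,\agdablue{refl}$, post-composed with one application of \agdablue{transportRefl} (in \cubicalagda, $\agdablue{subst}\,(\agdablue{Pos}\;\phi\;i)\,\agdablue{refl}\,p$ is only propositionally, not definitionally, equal to $p$). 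So the work reduces to supplying the two methods for $A^\star$.

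For the \agdagreen{here} method: given $d$ and $w : \agdaeq{f\;d}{c}$, I must inhabit $A\;d\;(\agdablue{subst}\,(\agdablue{Pos}\;\phi\;i)\,(\agdablue{sym}\;w)\,(\agdagreen{here}\;p))$; since \agdablue{subst} commutes with the constructor \agdagreen{here}, this rewrites along an explicit path to $A\;d\;(\agdagreen{here}\;p')$ with $p' : P\;i\;((\phi\;\agdablue{$\xi_0$})\;(f\;d))$ obtained by transporting $p$, and the hypothesis $h$ applied at $d$ and $p'$ finishes the case. For the \agdagreen{below} method, which is the crux: I am handed $q' : Q\;((\phi\;\agdablue{$\xi_0$})\;c)$, $r : \agdablue{Pos}\;\phi\;i\;((\phi\;\agdablue{$\xi_1$})\;c\;q')$, the recursive hypothesis $\mathsf{ih} : A^\star\;((\phi\;\agdablue{$\xi_1$})\;c\;q')\;r$, and arguments $d$, $w : \agdaeq{f\;d}{c}$. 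Transporting $q'$ and $r$ along $w$ produces $q'' : Q\;((\phi\;\agdablue{$\xi_0$})\;(f\;d))$ and $r'' : \agdablue{Pos}\;\phi\;i\;((\phi\;\agdablue{$\xi_1$})\;(f\;d)\;q'')$, and — using that \agdablue{subst} commutes with \agdagreen{below} — the goal becomes $A\;d\;(\agdagreen{below}\;q''\;r'')$, which I obtain by applying $b$ at $d$, $q''$, $r''$. The leftover obligation is $A\;(\widehat{f}_d\;q'')\;\widehat{r''}$, where $\widehat{r''}$ is $r''$ transported along the instance at $q''$ of the witness that the square in \cref{def-pos-ind} commutes. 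I meet it by instantiating $\mathsf{ih}$ at $d' \coloneqq \widehat{f}_d\;q''$ and at the composite path $f\;(\widehat{f}_d\;q'')\ \agdablue{$\equiv$}\ (\phi\;\agdablue{$\xi_1$})\;(f\;d)\;q''\ \agdablue{$\equiv$}\ (\phi\;\agdablue{$\xi_1$})\;c\;q'$ assembled from that commutation witness and the path relating $q''$ back to $q'$; functoriality of \agdablue{subst} along composites (and $\agdablue{sym}$ of a composite) then identifies the value returned by $\mathsf{ih}$ with $\widehat{r''}$ up to a path, along which a final \agdablue{subst} lands in the required type.

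I expect the main obstacle to be the path bookkeeping (in \cubicalagda, cube algebra) of the \agdagreen{below} case: one has to arrange the transports of $q'$, of $r$, and of the commutation witness of $\widehat{f}_d$ so that the output of the recursive hypothesis is, up to an explicitly computed path, exactly the argument that $b$ demands. This is of the same flavour as the transports that appear in the proof of the coinduction principle \agdablue{MCoind}. By contrast, the \agdagreen{here} method, the ``\agdablue{subst} commutes with constructors'' lemmas, and the final \agdablue{transportRefl}-correction are routine.
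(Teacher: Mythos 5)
Your proposal is correct and uses essentially the same approach as the paper: generalise the endpoint from $f\;d$ to an arbitrary $c : \phi\;.\agdapink{C}$ equipped with a path to $f\;d$, apply the ordinary \agdablue{Pos}-induction to this generalised family, and specialise at $c \coloneqq f\;d$ and the reflexive path. The paper's proof sketch is just terser (it omits the details of the two methods and the $\agdablue{transportRefl}$ correction you correctly flag), and uses the path in the opposite orientation, which is an inessential difference.
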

\begin{proof}[Proof sketch]
The induction principle follows immediately from the usual induction principle for \agdablue{Pos} but with the family $\widehat{A} : (c : \phi\;.\agdapink{C}) \to \agdablue{Pos}\;\phi\;i\;c \to \agdablue{Type}$ defined by
\[\widehat{A}\;c\;p \coloneqq (d : D) (t : \agdaeq{c}{f\;d}) \to A \; d \; \widehat{p} \]
where $\widehat{p}$ denotes the result of transporting $p$ along $t : \agdaeq{c}{f\;d}$. 
We obtain the appropriate statement by setting $c := f \;d$.
\end{proof}

\section{Fixed points}
\label{sec-fixedpoints}
Let us now show that the constructions from \cref{sec-settingup} are
correct: $\contfuncX{\agdablue{W}\,S\,Q}{\agdablue{Pos
WAlg}}{\mathbf{X}}$ is the initial $\llbracket F \rrbracket
(\mathbf{X}, -)$-algebra carrier, and
$\contfuncX{\agdablue{M}\,S\,Q}{\agdablue{Pos MAlg}}{\mathbf{X}}$ is
the terminal $\llbracket F \rrbracket (\mathbf{X}, -)$-coalgebra carrier. The
proofs in this section mostly follow those given
in \cite{abbott_tcs_2005}, but in the more general (UIP-free) setting of \agdablue{Type} instead of $\mathsf{Set}$. 

 We start off by showing that $\contfuncX{\agdablue{W}\;S\;Q}{\agdablue{Pos WAlg}}{\mathbf{X}}$ is the initial $\llbracket \cont{S}{\mathbf{P}}, Q \rrbracket (\mathbf{X}, -)$-algebra carrier. This proof is relatively straightforward.

\begin{theorem}\label{thm-least-fp}
Let $F=(\cont{S}{\mathbf{P}},Q)$ be a container in $Ind+1$ parameters with $S:\agdablue{Type}, \mathbf{P}\colon Ind\to S\to\agdablue{Type}$, $Q\colon S\to \agdablue{Type}$. For any fixed $\mathbf{X}\colon Ind\to \agdablue{Type}$, the type $\contfuncX{\agdablue{W}\;S\;Q}{\agdablue{Pos WAlg}}{\mathbf{X}}$ is the carrier of the initial algebra of $\llbracket F\rrbracket (\mathbf{X}, -) \colon\agdablue{Type}\to\agdablue{Type}$, i.e.\
\[\contfuncX{\agdablue{W}\;S\;Q}{\agdablue{Pos WAlg}}{\mathbf{X}} \cong \mu Y. \llbracket F\rrbracket (\mathbf{X}, Y).\]
\end{theorem}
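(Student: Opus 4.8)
The plan is to reduce the claim to a concrete equivalence of $\agdablue{W}$-types and then to build that equivalence by structural recursion. The first step is to notice that $\llbracket F\rrbracket(\mathbf{X},-)$ is \emph{itself} a unary container functor in the remaining variable: regrouping the $\mathbf{P}$-component with the shape in \eqref{eqn:cont-func} gives $\llbracket F\rrbracket(\mathbf{X},Y)\cong\contfuncX{A'}{Q'}{Y}$ with $A'\coloneqq\sum_{s:S}\prod_i(\mathbf{P}\,i\,s\to\mathbf{X}\,i)$ and $Q'\,(s,g)\coloneqq Q\,s$. Since the least fixed point of a unary container functor $\contfunc{A'}{Q'}$ is (the carrier of the initial algebra, namely) $\agdablue{W}\,A'\,Q'$ --- the fact recalled in \cref{sec-settingup} --- the theorem reduces to producing an equivalence
\[\contfuncX{\agdablue{W}\;S\;Q}{\agdablue{Pos WAlg}}{\mathbf{X}} \cong \agdablue{W}\,A'\,Q'.\]
An algebra structure on the left-hand side is already available from \eqref{iso} instantiated at $\phi=\agdablue{WAlg}$, but for the statement as phrased only this equivalence is needed.

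For the forward map $h$ I would recurse on the shape $w:\agdablue{W}\;S\;Q$, carrying the payload $g:\prod_i(\agdablue{Pos WAlg}\;i\;w\to\mathbf{X}\,i)$. Writing $w=\agdagreen{sup}\,s\,t$, the two constructors of $\agdablue{Pos}$ yield $\agdablue{Pos WAlg}\;i\;w\cong\mathbf{P}\,i\,s+\sum_{q:Q\,s}\agdablue{Pos WAlg}\;i\;(t\,q)$ --- precisely \eqref{iso} unfolded at $\phi=\agdablue{WAlg}$ --- so $g$ splits into a $\agdagreen{here}$-part $g_0:\prod_i(\mathbf{P}\,i\,s\to\mathbf{X}\,i)$ and, for each $q:Q\,s$, a $\agdagreen{below}$-part $g_q:\prod_i(\agdablue{Pos WAlg}\;i\;(t\,q)\to\mathbf{X}\,i)$; set $h\,(w,g)\coloneqq\agdagreen{sup}\,(s,g_0)\,(\lambda q.\,h\,(t\,q,g_q))$. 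For the inverse $k$ I would recurse on $w':\agdablue{W}\,A'\,Q'$: from $w'=\agdagreen{sup}\,(s,g_0)\,t'$, letting $k\,(t'\,q)=(w_q,e_q)$, output the shape $\agdagreen{sup}\,s\,(\lambda q.\,w_q)$ together with the payload over it fixed by the single case split $\agdagreen{here}\,p\mapsto g_0\,p$ and $\agdagreen{below}\,q\,p'\mapsto e_q\,p'$. Both $h$ and $k$ are plain $\agdablue{W}$-recursions combined with the elimination principle of $\agdablue{Pos}$.

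It then remains to verify that the two composites are identities. The roundtrip starting from $\agdablue{W}\,A'\,Q'$ is a direct $\agdablue{W}$-induction, with function extensionality used to compare the subtree-indexing families; the other roundtrip is again a $\agdablue{W}$-induction on the shape, after which the induction hypotheses at the subtrees pin down the reconstructed payload and a single $\agdagreen{here}/\agdagreen{below}$ case split closes the payload component. I expect the only real bookkeeping to be in that payload component: one must thread the equivalence $\agdablue{Pos WAlg}\;i\;w\cong\mathbf{P}\,i\,s+\sum_{q}\agdablue{Pos WAlg}\;i\;(t\,q)$ through the recursion so that the transports it introduces all cancel, and --- unlike in \cite{abbott_tcs_2005} --- without appealing to UIP anywhere. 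Everything else is routine, which is why this direction is, as advertised, relatively straightforward; the genuine obstacles, where plain $\agdablue{Pos}$ elimination no longer suffices and the generalised principle of \cref{lem-pos-ind} together with \cubicalagda's path type become essential, only surface in the coinductive counterpart \cref{thm-greatest-fp}.
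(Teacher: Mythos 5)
Your proof is correct but takes a genuinely different route from the paper's. The paper proves initiality \emph{directly}: it equips $\contfuncX{\agdablue{W}\;S\;Q}{\agdablue{Pos WAlg}}{\mathbf{X}}$ with the algebra structure $\agdablue{into}$ (defined by $\agdablue{Pos}$-recursion), constructs the mediating morphism $\agdablue{$\overline\alpha$}$ for an arbitrary algebra $(Y,\alpha)$ by $\agdablue{W}$-recursion, observes that the algebra square commutes definitionally, and then proves uniqueness by $\agdablue{W}$-induction against any competitor $\tilde\alpha$ satisfying the commutativity assumption. It never mentions $\agdablue{W}\,A'\,Q'$ or the regrouped unary container $\contfunc{A'}{Q'}$, and it never invokes ``$\agdablue{W}$-types are initial for unary containers'' as a black box. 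You instead (i) regroup $\llbracket F\rrbracket(\mathbf{X},-)$ into $\contfunc{A'}{Q'}$, (ii) invoke initiality of $\agdablue{W}\,A'\,Q'$, and (iii) exhibit a type equivalence $\contfuncX{\agdablue{W}\;S\;Q}{\agdablue{Pos WAlg}}{\mathbf{X}} \cong \agdablue{W}\,A'\,Q'$ by a pair of $\agdablue{W}$-recursions and a $\agdablue{here}/\agdablue{below}$ case split. This is a valid modularization, and the route buys reuse of a standard library fact; but note two costs the paper's route avoids. First, the phrase ``is the carrier of the initial algebra'' asks for slightly more than a bare type equivalence: you still have to check that $h$ (or $k$) is an algebra morphism for the canonical structures, so that initiality actually transfers, rather than just observing that the underlying types agree. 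Second, the round-trip proofs for $h$ and $k$ require the dependent payload component to be compared across a non-definitional path on the first component (the $\agdablue{W}$-induction hypothesis), forcing exactly the transport bookkeeping you flag; the paper sidesteps this entirely because, when one argues by the universal property directly, $\overline\alpha$ satisfies its defining equation definitionally and the uniqueness proof is a single $\agdablue{W}$-induction with no transports over $\agdablue{Pos}$. So the paper's direct proof is somewhat shorter in formalization even though your decomposition is arguably more conceptual.
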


\begin{proof}[Proof of \cref{thm-least-fp}] 
We write \agdablue{W} for $\agdablue{W}\;S\;Q$ and \agdablue{Posμ} for \agdablue{Pos WAlg}. We construct an $\llbracket F\rrbracket(\mathbf{X}, -)$-algebra with carrier $\contfuncX{\agdablue{W}}{\agdablue{Posμ}}{\mathbf{X}}$ by defining a morphism
\[\agdablue{into}\colon \llbracket F\rrbracket (\mathbf{X}, \contfuncX{\agdablue{W}}{\agdablue{Posμ}}{\mathbf{X}})\to \contfuncX{\agdablue{W}}{\agdablue{Posμ}}{\mathbf{X}}\]
by induction on \agdablue{Posμ} as follows.\footnote{We repackage the type of the input of \agdablue{into} via \cref{eqn:cont-func} and distributivity of functions over $\Sigma$. This also applies for the definition of \agdablue{out} in \cref{thm-greatest-fp}.}
\ExecuteMetaData[agda/latex/Code.tex]{morph}
Then $(\contfuncX{\agdablue{W}}{\agdablue{Posμ}}{\mathbf{X}}, \agdablue{into})$ is an $\llbracket F\rrbracket (\mathbf{X}, -)$-algebra. Now for any other algebra $(Y, \alpha)$, we need to define $\agdablue{$\overline\alpha$}\colon \contfuncX{\agdablue{W}}{\agdablue{Posμ}}{\mathbf{X}}\to Y$ uniquely such that the below diagram commutes.
\begin{equation}\label{comm-least}
\begin{tikzcd}
	\llbracket F\rrbracket(\mathbf{X}, \contfuncX{\agdablue{W}}{\agdablue{Posμ}}{\mathbf{X}})\arrow{r}{\agdablue{into}}\arrow[swap]{d}{\llbracket F\rrbracket(\mathbf{X}, \agdablue{$\overline\alpha$})} & \contfuncX{\agdablue{W}}{\agdablue{Posμ}}{\mathbf{X}}\arrow{d}{\agdablue{$\overline\alpha$}}\\
	\llbracket F\rrbracket (\mathbf{X}, Y)\arrow[swap]{r}{\alpha} &Y
\end{tikzcd}
\end{equation}

We define $\agdablue{$\overline\alpha$} \colon \underset{w : \agdablue{W}}{\sum} ((ind : Ind)\to \agdablue{Posμ}\; ind\; w \to \mathbf{X}\; ind)\to Y$ by induction on \agdablue{W}.\footnote{Technically, this definition raises a termination checking error, but this is easily fixed in the actual code by defining the uncurried version first then writing $\agdablue{$\overline\alpha$}$ in terms of it.}
\ExecuteMetaData[agda/latex/Code.tex]{ae2}
That \eqref{comm-least} commutes then follows definitionally.

The only thing left to show is that $\agdablue{$\overline\alpha$}$ is unique. We assume there is another arrow $\tilde{\alpha}\colon \contfuncX{\agdablue{W}}{\agdablue{Posμ}}{\mathbf{X}} \to Y$ making \eqref{comm-least} commute, i.e.\
\begin{equation}
\tilde{\alpha} \circ \agdablue{into} \;\agdablue{$\equiv$}\; \alpha \circ \llbracket F\rrbracket (\mathbf{X}, \tilde{\alpha}), \label{assumption1}
\end{equation}  
and prove that for $w : \agdablue{W}, k\colon \agdablue{Posμ}\; ind\; w \to \mathbf{X}\; ind$, we have $\tilde{\alpha}(w, k) \;\agdablue{$\equiv$}\; \agdablue{$\overline\alpha$} (w,k)$. By induction on \agdablue{W}, we have to show that for $s:S, f\colon Q\;s \to \agdablue{W}$, we have $\tilde{\alpha} (\agdagreen{sup-W}\;s\;f, k) \;\agdablue{$\equiv$}\; \agdablue{$\overline\alpha$} (\agdagreen{sup-W}\;s\;f, k)$. This follows easily from $\agdablue{$\overline\alpha$}$'s definition, assumption \eqref{assumption1}, and our inductive hypothesis.\qedhere

\end{proof}

Next, we show that $\contfuncX{\agdablue{M}\;S\;Q}{\agdablue{Pos MAlg}}{\mathbf{X}}$ is the terminal $\llbracket \cont{S}{\mathbf{P}}, Q \rrbracket (\mathbf{X}, -)$-coalgebra carrier. This proof is significantly more challenging than the previous one, both theoretically in that we use a modified version of the induction principle for \agdablue{Pos}, and also technically in that we have to go through several workarounds for \cubicalagda to accept our proof. It also requires us to use a considerable amount of path algebra to prove coherences that are not needed when assuming UIP, which appears to be implicitly assumed in the original proof. 

\begin{theorem}\label{thm-greatest-fp}
Let $F=(\cont{S}{\mathbf{P}},Q)$ be a container in $Ind+1$ parameters with $S:\agdablue{Type}, \mathbf{P}\colon Ind\to S\to\agdablue{Type}$, and $Q\colon S\to \agdablue{Type}$. For any fixed $\mathbf{X}\colon Ind\to \agdablue{Type}$, the type $\contfuncX{\agdablue{M}\;S\;Q}{\agdablue{Pos MAlg}}{\mathbf{X}}$ is the carrier of the terminal coalgebra of $\llbracket F\rrbracket (\mathbf{X}, -) \colon\agdablue{Type}\to\agdablue{Type}$, i.e.\
\[\contfuncX{\agdablue{M}\;S\;Q}{\agdablue{Pos MAlg}}{\mathbf{X}} \cong \nu Y. \llbracket F\rrbracket (\mathbf{X}, Y).\]
\end{theorem}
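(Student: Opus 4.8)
The argument is dual to the proof of \cref{thm-least-fp}, now with the extra complications that coinduction on \agdablue{M} and the limited induction principle of \agdablue{Pos} bring. Write \agdablue{M} for $\agdablue{M}\;S\;Q$ and write \agdablue{Posν} for \agdablue{Pos MAlg}; recall that \agdablue{MAlg} is the fixed point with carrier \agdablue{M} whose fixed-point isomorphism, inverted, sends an \agdablue{M}-tree $m$ to the pair $(\agdapink{shape}\;m,\agdapink{pos}\;m)$ of its root shape and its subtree function. First I would equip $\contfuncX{\agdablue{M}}{\agdablue{Posν}}{\mathbf{X}}$ with an $\llbracket F\rrbracket(\mathbf{X},-)$-coalgebra structure by defining, dually to the map \agdablue{into} from the proof of \cref{thm-least-fp}, a map $\agdablue{out}\colon\contfuncX{\agdablue{M}}{\agdablue{Posν}}{\mathbf{X}}\to\llbracket F\rrbracket(\mathbf{X},\contfuncX{\agdablue{M}}{\agdablue{Posν}}{\mathbf{X}})$: on $(m,k)$ with $k\colon(ind:Ind)\to\agdablue{Posν}\;ind\;m\to\mathbf{X}\;ind$, the value $\agdablue{out}\,(m,k)$ has shape $\agdapink{shape}\;m$, has $\lambda\;ind\;p.\;k\;ind\,(\agdagreen{here}\;p)$ as its $\mathbf{P}$-component, and sends $q$ to $(\agdapink{pos}\;m\;q,\;\lambda\;ind\;p'.\;k\;ind\,(\agdagreen{below}\;q\;p'))$ as its $Q$-component, where \agdagreen{here} and \agdagreen{below} are the constructors of \agdablue{Posν} (this is just the rearrangement that motivated the definition of \agdablue{Pos}, read backwards, modulo the repackaging of \cref{eqn:cont-func} and distributivity of functions over $\Sigma$). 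Since \agdablue{out} has this type, $(\contfuncX{\agdablue{M}}{\agdablue{Posν}}{\mathbf{X}},\agdablue{out})$ is an $\llbracket F\rrbracket(\mathbf{X},-)$-coalgebra.

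For an arbitrary coalgebra $(Y,\alpha)$ I would build the mediating morphism $\overline\alpha\colon Y\to\contfuncX{\agdablue{M}}{\agdablue{Posν}}{\mathbf{X}}$, which is a pair of an \agdablue{M}-tree and a position map. Writing $\alpha\;y=(s_y,h_y,g_y)$ via \cref{eqn:cont-func}, the tree component $\overline\alpha_0\colon Y\to\agdablue{M}$ is defined by copattern matching, $\agdapink{shape}\,(\overline\alpha_0\;y)\coloneqq s_y$ and $\agdapink{pos}\,(\overline\alpha_0\;y)\;q\coloneqq\overline\alpha_0\,(g_y\;q)$. The position component $\overline\alpha_1\;y\colon(ind:Ind)\to\agdablue{Posν}\;ind\;(\overline\alpha_0\;y)\to\mathbf{X}\;ind$ is the delicate point, because ordinary \agdablue{Pos} induction does not apply --- \agdablue{Posν} is consulted only at the single tree $\overline\alpha_0\;y$. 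Instead I would fix $ind$ and invoke \cref{lem-pos-ind} with $D\coloneqq Y$ and $f\coloneqq\overline\alpha_0$. This $f$ is an \agdablue{MAlg}-retraction in the sense of \cref{def-pos-ind}: the required lift is $\widehat{f}_y\coloneqq g_y\colon Q\;s_y\to Y$, and the triangle in \cref{def-pos-ind} commutes definitionally, since for \agdablue{MAlg} its bottom edge evaluates to $\agdapink{pos}\,(\overline\alpha_0\;y)$, which the copattern clause identifies with $\overline\alpha_0\circ g_y$; hence the transports $\widehat{p}$ demanded by \cref{lem-pos-ind} are trivial here. The \agdagreen{here}-clause of the induction then returns $h_y\;ind\;p$ and the \agdagreen{below}-clause returns the recursively supplied value in $\mathbf{X}\;ind$. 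With these definitions the coalgebra square $\agdablue{out}\circ\overline\alpha\;\agdablue{$\equiv$}\;\llbracket F\rrbracket(\mathbf{X},\overline\alpha)\circ\alpha$ holds essentially by computation, modulo the same repackaging.

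It remains to prove uniqueness. Suppose $\tilde\alpha\colon Y\to\contfuncX{\agdablue{M}}{\agdablue{Posν}}{\mathbf{X}}$, with components $\tilde\alpha_0,\tilde\alpha_1$, also makes the square commute; then $\agdapink{shape}\,(\tilde\alpha_0\;y)=s_y$, $\agdapink{pos}\,(\tilde\alpha_0\;y)\;q=\tilde\alpha_0\,(g_y\;q)$, $\tilde\alpha_1\;y\;ind\,(\agdagreen{here}\;p)=h_y\;ind\;p$, and $\tilde\alpha_1\;y\;ind\,(\agdagreen{below}\;q\;p')$ agrees with $\tilde\alpha_1\,(g_y\;q)\;ind\;p'$ up to the first of these paths. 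The equality $\agdaeq{\tilde\alpha_0\;y}{\overline\alpha_0\;y}$ in \agdablue{M} cannot be obtained structurally; here I would apply the coinduction principle \agdablue{MCoind} from \cref{sec:w-m} to the relation $R$ that holds of $m_0,m_1$ precisely when they are the $\tilde\alpha_0$- and $\overline\alpha_0$-images of a common $y\colon Y$. By the equations above, $R$ is a bisimulation in the sense of \agdablue{M-R}: the shapes of $\tilde\alpha_0\;y$ and $\overline\alpha_0\;y$ both equal $s_y$, and their positions at $q$ are the $\tilde\alpha_0$- and $\overline\alpha_0$-images of $g_y\;q$, hence $R$-related (the dependent-path bookkeeping over the shape path being routine). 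This yields a path $e_y\colon\agdaeq{\tilde\alpha_0\;y}{\overline\alpha_0\;y}$. Finally I would show that $\tilde\alpha_1\;y$ equals $\overline\alpha_1\;y$ \emph{over} $e_y$; unfolding this dependent path pointwise and using \cref{lem-pos-ind} once more (with $D\coloneqq Y$, $f\coloneqq\overline\alpha_0$) for the induction on the \agdablue{Posν}-argument, the \agdagreen{here}- and \agdagreen{below}-cases reduce to the commuting-square equations for $\tilde\alpha$ together with the inductive hypothesis, up to the coherences relating \agdagreen{here}, \agdagreen{below}, transport along $e_y$, and the shape paths extracted from the two squares. Packaging $e_y$ with this dependent path gives $\agdaeq{\tilde\alpha\;y}{\overline\alpha\;y}$ in the $\Sigma$-type $\contfuncX{\agdablue{M}}{\agdablue{Posν}}{\mathbf{X}}$, and function extensionality over $y$ concludes $\agdaeq{\tilde\alpha}{\overline\alpha}$.

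The main obstacle is the path-algebra bookkeeping around the position components. Defining $\overline\alpha_1$ at all already requires the generalised principle \cref{lem-pos-ind} rather than plain pattern matching, and --- more seriously --- the uniqueness argument forces us to control a dependent path of \agdablue{Posν}-elements lying over the coinductively produced equality $e_y$ of \agdablue{M}-trees, i.e.\ to prove explicitly that \agdagreen{here} and \agdagreen{below} commute with transport along $e_y$ and with the shape paths coming from the coalgebra squares. These coherences are invisible under UIP but must be discharged by hand in \cubicalagda, which is where the bulk of the formalisation effort sits; the remaining steps are the dual of \cref{thm-least-fp} and are computationally straightforward.
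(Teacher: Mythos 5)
Your overall strategy matches the paper's: equip $\contfuncX{\agdablue{M}}{\agdablue{Pos\nu}}{\mathbf{X}}$ with the evident coalgebra structure $\agdablue{out}$, build the mediating morphism as a pair (an \agdablue{M}-tree via copatterns, a position function via \agdablue{Pos\nu}-induction), and prove uniqueness by splitting into a tree component (via \agdablue{MCoind} applied to a tailored relation) and a position component (via \agdablue{Pos\nu}-induction again). All of this the paper also does.

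Two points, one minor and one substantive. Minor: you invoke \cref{lem-pos-ind} already when \emph{constructing} $\overline\alpha_1$, but this is unnecessary. Since $\overline\alpha_0$ is given by copattern clauses, $\agdapink{shape}\,(\overline\alpha_0\;y)$ and $\agdapink{pos}\,(\overline\alpha_0\;y)$ reduce definitionally, so Agda's plain pattern matching on $\agdablue{Pos\nu}\;ind\;(\overline\alpha_0\;y)$ goes through with no generalisation needed. The paper uses \cref{lem-pos-ind} (together with \cref{lem-retract}) only in the uniqueness argument.

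The substantive gap is in the uniqueness argument. You correctly observe that the hard part is controlling the dependent path of positions over $e_y$, and you flag that these coherences ``are invisible under UIP but must be discharged by hand''; but you stop there, as if what remains were mechanical transport-pushing. It is not. The path $e_y$ ($=\agdablue{fstEq}\;y$) is produced by \agdablue{MCoind} and is therefore opaque: you cannot simply unfold $\agdapink{pos}\,(e_y\;i)$ and read off what it is. In the \agdagreen{below}-case, after commuting transports with \agdagreen{below} and using $\comm{4}$, both sides become terms of the form $\tilde\alpha_2\;y\;ind\;(\agdagreen{below}\;(\agdablue{transport}\dots q)\;(\agdablue{transport}\dots p))$, but transported over \emph{different} families, and equating those families amounts to showing that $\agdapink{pos}$ applied to $\agdablue{fstEq}\;y$ coincides with the ``expected'' corecursive answer obtained by composing $\comm{2}\;y$ with $\agdablue{fstEq}\;\circ\;(\beta h\;y)$. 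This is exactly the paper's \cref{lem-technical}, and it is the single lemma that separates the UIP-free proof from the original. It does not follow from unfolding; the paper proves it by abstracting and doing path induction on $\comm{1}\;y$, which lets one simplify the \agdablue{isBisimR}/\agdablue{MCoind} instances in the special case $\comm{1}\;y=\agdablue{refl}$. (Relatedly: you do not explain why $e_y$ must be built via \agdablue{MCoind} at all rather than defined directly by interval abstraction plus copatterns with a corecursive call — the latter would make \cref{lem-technical} definitional, but Agda's termination checker rejects it, which is why the lemma is needed.) Your sketch is structurally sound but incomplete without this step: as written it leaves exactly the part that the paper identifies as its main technical contribution.
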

Before we prove~\cref{thm-greatest-fp}, we spell out what it is we need to show. We write $\agdablue{M}$ for $\agdablue{M}\;S\;Q$ and $\agdablue{Posν}$ for $\agdablue{Pos MAlg}$. First, we construct an $\llbracket F\rrbracket (\mathbf{X}, -)$-coalgebra with carrier $\contfuncX{\agdablue{M}}{\agdablue{Posν}}{\mathbf{X}}$ by defining
\[\agdablue{out}\colon \contfuncX{\agdablue{M}}{\agdablue{Posν}}{\mathbf{X}}\to \llbracket F\rrbracket (\mathbf{X}, \contfuncX{\agdablue{M}}{\agdablue{Posν}}{\mathbf{X}})\]
roughly as $\agdablue{into}^{-1}$, where \agdablue{into} is the function from \cref{thm-least-fp}. 
\ExecuteMetaData[agda/latex/Code.tex]{out}
So $(\contfuncX{\agdablue{M}}{\agdablue{Posν}}{\mathbf{X}}, \agdablue{out})$ is an $\llbracket F\rrbracket (\mathbf{X}, -)$-coalgebra. For any other coalgebra $(Y, \beta)$, we need to define $\agdablue{$\overline\beta$}\colon Y\to \contfuncX{\agdablue{M}}{\agdablue{Posν}}{\mathbf{X}}$ uniquely, such that the below diagram commutes.
\begin{equation}\label{comm-greatest}
\begin{tikzcd}
  Y\arrow[swap]{d}{\agdablue{$\overline\beta$}}\arrow{r}{\beta} &\llbracket F\rrbracket (\mathbf{X}, Y)\arrow{d}{\llbracket F\rrbracket(\mathbf{X}, \agdablue{$\overline\beta$})}\\
  \contfuncX{\agdablue{M}}{\agdablue{Posν}}{\mathbf{X}} \arrow[swap]{r}{\agdablue{out}}  &\llbracket F\rrbracket(\mathbf{X}, \contfuncX{\agdablue{M}}{\agdablue{Posν}}{\mathbf{X}})      
\end{tikzcd}
\end{equation}
To this end, from now on we fix $\beta\colon Y\to \llbracket F\rrbracket (\mathbf{X}, Y)$ with the following components.
\begin{align*}
\beta s&\colon Y\to S\\
\beta g&\colon (y : Y)\;(ind:Ind)\to \mathbf{P}\; ind\; (\beta s\; y) \to \mathbf{X}\; ind\\
\beta h&\colon (y : Y)\to Q\; (\beta s\; y) \to Y
\end{align*}
To prove~\cref{thm-greatest-fp} we (i) construct
$\agdablue{$\overline\beta$}\colon Y\to \underset{m : \agdablue{M}}{\sum}
((ind:Ind)\to \agdablue{Posν}\; ind\; m \to \mathbf{X}\; ind)$ such
that \eqref{comm-greatest} commutes and (ii) show that this
$\agdablue{$\overline\beta$}$ is unique. This will be the content
of \cref{lem-beta-exists,lem-beta-unique}.
\begin{lemma}\label{lem-beta-exists}
There is a function $\agdablue{$\overline\beta$}\colon Y\to \underset{m : \agdablue{M}}{\sum} ((ind : Ind)\to \agdablue{Posν}\; ind\; m \to \mathbf{X}\; ind)$ making \eqref{comm-greatest} commute.
\end{lemma}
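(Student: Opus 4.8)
\emph{Proof proposal.} The plan is to build $\overline{\beta}$ componentwise; write $\overline{\beta}_0\colon Y\to\agdablue{M}$ and $\overline{\beta}_1$ for the two projections of the map to be constructed. First I would define $\overline{\beta}_0$ by guarded corecursion via copattern matching, setting $\agdapink{shape}\,(\overline{\beta}_0\,y)\coloneqq\beta s\,y$ and $\agdapink{pos}\,(\overline{\beta}_0\,y)\,q\coloneqq\overline{\beta}_0\,(\beta h\,y\,q)$. This is just the canonical unfolding of the $\contfunc{S}{Q}$-coalgebra $(\beta s,\beta h)$ carried by $Y$ into the terminal such coalgebra $\agdablue{M}\,S\,Q$, obtained by composing $\beta$ with the first projection $\contfuncX{S}{Q}{\agdablue{M}}\cong\agdablue{M}$ and corecursing.

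The interesting component is $\overline{\beta}_1$, which for each $y$ should be a map $(ind\colon Ind)\to\agdablue{Posν}\,ind\,(\overline{\beta}_0\,y)\to\mathbf{X}\,ind$. Here $\agdablue{Posν}$ ranges over the particular tree $\overline{\beta}_0\,y$ rather than a generic point of $\agdablue{M}$, so the ordinary $\agdablue{Pos}$ eliminator does not apply; instead I would invoke \cref{lem-pos-ind}. That lemma requires $\overline{\beta}_0$ to be an $\agdablue{MAlg}$-retraction, and I would discharge this by taking the lift $\widehat{(\overline{\beta}_0)}_y$ of \cref{def-pos-ind} to be $\beta h\,y$ itself. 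Indeed, unfolding $(\agdablue{MAlg}\;\agdablue{$\xi_0$})$ and $(\agdablue{MAlg}\;\agdablue{$\xi_1$})$ at $\overline{\beta}_0\,y$ returns exactly $\beta s\,y$ and $\lambda q.\,\overline{\beta}_0\,(\beta h\,y\,q)$, so the triangle of \cref{def-pos-ind} commutes definitionally, its witness being reflexivity. Applying \cref{lem-pos-ind} with $D\coloneqq Y$, $f\coloneqq\overline{\beta}_0$, the constant family $A\,y\,p\coloneqq\mathbf{X}\,ind$, base case $h\,p\coloneqq\beta g\,y\,ind\,p$ (well typed since $(\agdablue{MAlg}\;\agdablue{$\xi_0$})\,(\overline{\beta}_0\,y)=\beta s\,y$ is precisely the domain of $\beta g\,y\,ind$), and step case $b\,q\,p\,x\coloneqq x$ forwarding the recursive value, yields the desired map for each $ind$; collecting these over $ind$ gives $\overline{\beta}_1$, and I set $\overline{\beta}\,y\coloneqq(\overline{\beta}_0\,y,\overline{\beta}_1)$.

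It remains to check that \eqref{comm-greatest} commutes, which I would do by unfolding both composites at $y\colon Y$. On the left, $\agdablue{out}\,(\overline{\beta}\,y)$ has shape $\beta s\,y$, its $\mathbf{P}$-payload is $\overline{\beta}_1\,ind\,(\agdagreen{here}\,p)=\beta g\,y\,ind\,p$ by the $\agdagreen{here}$ clause of the eliminator, and its $q$-th subtree is $\bigl(\overline{\beta}_0\,(\beta h\,y\,q),\ \lambda\,ind\,p.\,\overline{\beta}_1\,ind\,(\agdagreen{below}\,q\,p)\bigr)$; on the right, $\llbracket F\rrbracket(\mathbf{X},\overline{\beta})\,(\beta\,y)$ has shape $\beta s\,y$, $\mathbf{P}$-payload $\beta g\,y$, and $q$-th subtree $\overline{\beta}\,(\beta h\,y\,q)$. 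These agree because the $\agdagreen{below}$ clause of $b$ passes its recursive argument through unchanged; and since the retraction witness above is reflexivity, the transport $\widehat{p}$ in \cref{lem-pos-ind} lies over a trivial path, so the identification should need only light cube bookkeeping, not substantial path algebra.

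I expect the main obstacle to be conceptual rather than computational: recognising that $\overline{\beta}_1$ cannot be defined by ordinary $\agdablue{Pos}$ induction and must instead go through \cref{lem-pos-ind}, and then pinning down the $\agdablue{MAlg}$-retraction structure of $\overline{\beta}_0$. Once the lift is identified as $\beta h$, the clauses $h$ and $b$ are essentially forced and the verification of \eqref{comm-greatest} is routine; the genuinely delicate path algebra advertised for \cref{thm-greatest-fp} is deferred to the uniqueness argument in \cref{lem-beta-unique}.
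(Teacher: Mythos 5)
Your corecursive definition of $\overline{\beta}_0$ (the paper's $\overline{\beta}_1$) coincides with the paper's. For the second component, however, you take a detour that the paper does not. You argue that since $\agdablue{Posν}$ is indexed by the particular element $\overline{\beta}_0\,y$ rather than a generic $m:\agdablue{M}$, ordinary $\agdablue{Pos}$ induction is unavailable, so you reach for \cref{lem-pos-ind}, proving $\overline{\beta}_0$ is an $\agdablue{MAlg}$-retraction with lift $\beta h\,y$. The paper instead defines $\overline{\beta}_2$ by \emph{ordinary pattern matching} on $\agdablue{Posν}$, with $y:Y$ threaded through as an argument that varies along the recursion. This is possible precisely because the index $\overline{\beta}_1\,y$ \emph{computes} under the $\agdablue{M}$ destructors: $\agdapink{shape}\,(\overline{\beta}_1\,y)$ reduces definitionally to $\beta s\,y$ and $\agdapink{pos}\,(\overline{\beta}_1\,y)\,q$ to $\overline{\beta}_1\,(\beta h\,y\,q)$, so after matching $\agdagreen{below}\,q\,p$ the recursive argument $p$ already has type $\agdablue{Posν}\;ind\;(\overline{\beta}_1\,(\beta h\,y\,q))$, and the call $\overline{\beta}_2\,(\beta h\,y\,q)\,ind\,p$ typechecks with a structural decrease on $p$. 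You are right that the \emph{raw} eliminator quantifying over all $c:\agdablue{M}$ is not directly applicable, but Agda's dependent pattern matching is, and that is what the paper uses. The payoff is that commutativity of \eqref{comm-greatest} then holds literally \emph{by definition}. Your route through \cref{lem-pos-ind} is not wrong — you correctly identify the $\agdablue{refl}$ retraction witness, which is exactly \cref{lem-retract}, which the paper proves later for the \emph{uniqueness} argument — but because \cref{lem-pos-ind} internally generalises the family and transports along the witness, and because in \cubicalagda $\agdablue{transport}\;\agdablue{refl}$ is not definitionally the identity, you trade a definitional equality for genuine (if minor) transport bookkeeping, and you forfeit the clean computation rules for $\overline{\beta}_2$ that the paper leans on in \cref{lem-beta-unique}. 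The simplification you missed is that the coinductive unfolding of $\overline{\beta}_1$ makes the $\agdablue{Posν}$ index reduce, which renders ordinary pattern matching available here; \cref{lem-pos-ind} is only needed later, where the index is $\overline{\beta}_1\,y$ but the motive genuinely involves transporting positions between $\tilde{\beta}_1\,y$ and $\overline{\beta}_1\,y$.
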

\begin{proof}[Proof/construction]
We will define $\agdablue{$\overline\beta$}$ by
\ExecuteMetaData[agda/latex/Code.tex]{bb}
where $\agdablue{$\overline\beta_1$} \colon Y \to \agdablue{M}$  is defined by coinduction on \agdablue{M} and $\agdablue{$\overline\beta_2$} \colon (y:Y)\; (ind :Ind)\to \agdablue{Posν}\; ind\; (\overline\beta_1\; y) \to \mathbf{X}\; ind\;$ is defined by induction on \agdablue{Posν} as follows.\\
\begin{minipage}[t]{0.4\textwidth}
\minipageskip
\ExecuteMetaData[agda/latex/Code.tex]{bone}
\end{minipage}
\begin{minipage}[t]{0.50\textwidth}
\minipageskip
\ExecuteMetaData[agda/latex/Code.tex]{btwo}
\end{minipage}\\
This construction makes \eqref{comm-greatest} commute by definition.
\end{proof}

To show that $\agdablue{$\overline\beta$}$ is unique, we assume there is another arrow $\tilde{\beta}\colon Y \to  \contfuncX{\agdablue{M}}{\agdablue{Posν}}{\mathbf{X}}$ making the above diagram commute, i.e.\
\begin{equation}
\agdablue{out}\circ \tilde{\beta} \;\agdablue{$\equiv$}\; \llbracket F\rrbracket (\mathbf{X}, \tilde{\beta}) \circ \beta, \label{assumption2}
\end{equation}
then show that $\tilde{\beta} \;\agdablue{$\equiv$}\; \agdablue{$\overline\beta$}$. Naming $\tilde{\beta}$'s first and second projections $\tilde{\beta}_1$ and $\tilde{\beta}_2$, assumption \eqref{assumption2} can be split up into the paths shown below. We remark that all but the first one of these paths are dependent paths.
In what follows, we fix $y:Y$.
\[\comm{1}\;y \;\colon \agdapink{shape}\; (\tilde{\beta}_1\; y) \;\agdablue{$\equiv$}\; \beta s\; y\]
\vspace{-1.2cm} \ \\
\[\comm{2}\;y\; \colon \agdapink{pos}\; (\tilde{\beta}_1\; y)\; \agdablue{$\approxeq$}\; (\lambda\; q\to \tilde{\beta}_1\; (\beta h\; y\; q)) \]
\vspace{-.5cm} \ \\
\mycomment{dependent path over $(\lambda\; i \to Q\; (\comm{1}\;y\; i) \to \agdablue{M})$}
\vspace{-.3cm}
\[\comm{3}\;y\; \colon (\lambda\; ind\; p\to  \tilde{\beta}_2\; y\; ind\; (\agdagreen{here}\; p)) \; \agdablue{$\approxeq$}\; \beta g\; y \]
\vspace{-.5cm} \ \\
\mycomment{dependent path over $(\lambda\; i \to (ind : Ind) \to P\; ind\; (\comm{1}\; y\; i) \to \;X \;ind)$}
\vspace{-.3cm}
\[\comm{4} \;y\; \colon  (\lambda\; ind\; q\; b\to \tilde{\beta}_2\; y\; ind\; (\agdagreen{below}\; q\; b)) \; \agdablue{$\approxeq$}\; (\lambda\; ind\; q\; b\to \tilde{\beta}_2\; (\beta h\; y\; q)\; ind\; b)\]
\vspace{-.5cm} \ \\
\mycomment{dependent path over $(\lambda\; i\to (ind : Ind) (q : Q\; (\comm{1}\; y\; i)) \to \agdablue{Posν}\;ind\; (\comm{2}\; y\; i\; q) \to X\; ind)$}
\vspace{.2cm}\\
These equations simply express the fact that for $\tilde{\beta}$ to make diagram~\eqref{comm-greatest} commute, $\tilde{\beta}_1$ and $\tilde{\beta}_2$ have to be defined in the same way component-wise as \agdablue{$\overline\beta_1$} and \agdablue{$\overline\beta_2$}, up to a path.

\begin{lemma}\label{lem-beta-unique}
The function $\agdablue{$\overline\beta$}\colon Y\to \underset{m
: \agdablue{M}}{\sum} ((ind : Ind)\to \agdablue{Posν}\; ind\; m \to \mathbf{X}\;
ind)$ from \cref{lem-beta-exists} is unique. In other words, under the assumption of the existence of $\comm{1}$--$\comm{4}$ above, we can construct the following paths.
\textnormal{\ExecuteMetaData[agda/latex/Code.tex]{fstEqGoal}}
\vspace{-1cm}
\textnormal{\ExecuteMetaData[agda/latex/Code.tex]{sndEqGoal}}
\vspace{-.6cm} \ \\
\mycomment{\textnormal{dependent path over $(\lambda i \to (ind : Ind) \to \agdablue{Posv} \;ind\; (\agdablue{fstEq}\;y\;i)\; X\;ind )$}}       
\end{lemma}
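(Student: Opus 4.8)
My plan is to construct the two displayed paths in turn. The first, $\agdablue{fstEq}$, asserts $\tilde\beta_1 \equiv \overline\beta_1$ as maps $Y \to \agdablue{M}$; since $\agdablue{M}$ is coinductive I would obtain it from the coinduction principle $\agdablue{MCoind}$, feeding it a bisimulation built out of the commutation data $\comm{1}$ and $\comm{2}$. The second, $\agdablue{sndEq}$, is a dependent path $\tilde\beta_2\;y \approxeq \overline\beta_2\;y$ over the family of types $\lambda i.\,(ind : Ind) \to \agdablue{Pos MAlg}\;ind\;(\agdablue{fstEq}\;y\;i) \to \mathbf{X}\;ind$ determined by the first path; since $\agdablue{Pos}$ is inductive I would obtain it by induction on $\agdablue{Pos}$ --- but necessarily the \emph{generalised} version, \cref{lem-pos-ind}, because the motive is only defined over the positions of the single tree $\overline\beta_1\;y$, not over all of $\agdablue{M}$.

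For $\agdablue{fstEq}$, by function extensionality it suffices to give, for each $y : Y$, a path $\tilde\beta_1\;y \equiv \overline\beta_1\;y$ in $\agdablue{M}$. I would obtain this from $\agdablue{MCoind}$ applied to the relation $R\;m_0\;m_1 \coloneqq \sum_{y:Y} (m_0 \equiv \tilde\beta_1\;y) \times (m_1 \equiv \overline\beta_1\;y)$, which evidently relates $\tilde\beta_1\;y$ and $\overline\beta_1\;y$ (witness $y$ and two reflexivities). What needs checking is that $R$ satisfies the bisimulation predicate $\agdablue{M-R}$. Recall that, by the coinductive definition of $\overline\beta_1$, one has $\agdapink{shape}\,(\overline\beta_1\;y) = \beta s\;y$ and $\agdapink{pos}\,(\overline\beta_1\;y)\;q = \overline\beta_1\,(\beta h\;y\;q)$ definitionally. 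Hence the required equality of shapes is exactly $\comm{1}\;y$, and, given related positions $q_0 \approxeq q_1$ over $\comm{1}\;y$, the required $R$-relatedness of $\agdapink{pos}\,(\tilde\beta_1\;y)\;q_0$ and $\overline\beta_1\,(\beta h\;y\;q_1)$ is witnessed by $\beta h\;y\;q_1$, using $\comm{2}\;y$ for the left component and reflexivity for the right.

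For $\agdablue{sndEq}$, I would fix $y$ and $ind$, turn the dependent path into an honest equality via $\agdablue{transport}$, and reduce by function extensionality to showing, for each $p : \agdablue{Pos MAlg}\;ind\;(\overline\beta_1\;y)$, an equality in $\mathbf{X}\;ind$ between $\overline\beta_2\;y\;ind\;p$ and $\tilde\beta_2\;y\;ind$ applied to $p$ transported along $\agdablue{fstEq}$. The key observation is that $\overline\beta_1 : Y \to \agdablue{M}$ is an $\agdablue{MAlg}$-retraction in the sense of \cref{def-pos-ind}: the tree $\overline\beta_1\;y$ decomposes into the shape $\beta s\;y$ and the subtrees $q \mapsto \overline\beta_1\,(\beta h\;y\;q)$, so the lift required by \cref{def-pos-ind} is simply $\beta h\;y$ and the commuting triangle holds by reflexivity. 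Thus \cref{lem-pos-ind} applies with $D \coloneqq Y$ and $f \coloneqq \overline\beta_1$, and I would do generalised $\agdablue{Pos}$ induction on $p$. In the $\agdagreen{here}\;p$ case both sides compute: $\overline\beta_2\;y\;ind\,(\agdagreen{here}\;p) = \beta g\;y\;ind\;p$ by definition of $\overline\beta_2$, and the $\tilde\beta_2$ side equals $\beta g\;y\;ind\;p$ by $\comm{3}\;y$ (together with $\comm{1}\;y$). In the $\agdagreen{below}\;q\;b$ case, $\overline\beta_2\;y\;ind\,(\agdagreen{below}\;q\;b) = \overline\beta_2\,(\beta h\;y\;q)\;ind\;b$ by definition, the inductive hypothesis supplies the comparison at $\beta h\;y\;q$ and $b$, and $\comm{4}\;y$ relates $\tilde\beta_2\;y\;ind\,(\agdagreen{below}\;q\;b)$ to $\tilde\beta_2\,(\beta h\;y\;q)\;ind\;b$; pasting these together closes the case.

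The hard part --- and essentially the whole content of the argument beyond this skeleton --- is the coherence bookkeeping in the $\agdagreen{below}$ case of $\agdablue{sndEq}$, where three coercions must be shown to agree: the transport along $\comm{2}\;y$ that is baked into the type over which $\comm{4}\;y$ lives, the transport of the inductive hypothesis along $\agdablue{fstEq}$, and the constant-path transport coming from the $\agdablue{MAlg}$-retraction witness in \cref{lem-pos-ind}. These do coincide, because $\agdablue{fstEq}$ was produced by $\agdablue{MCoind}$ precisely from $\comm{1}$ and $\comm{2}$ and $\agdablue{MCoind}$ computes on $\agdapink{shape}$ and $\agdapink{pos}$; but turning this into a term that \cubicalagda accepts requires explicit cube and $\agdablue{transport}$ algebra, together with a coherence for transport along a constant path. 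Such identifications are trivial under UIP, which is why they never surface in the original argument of Abbott et al.; here they are unavoidable and account for the bulk of the work.
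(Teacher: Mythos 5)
Your plan matches the paper's proof closely in both structure and detail: \agdablue{fstEq} is built with \agdablue{MCoind} from a bisimulation extracted from $\comm{1}$ and $\comm{2}$, and \agdablue{sndEq} is built with the generalised \agdablue{Pos} induction of \cref{lem-pos-ind}, applied through the \agdablue{MAlg}-retraction witness $\widehat{f}_y = \beta h\;y$ (\cref{lem-retract}), with $\comm{3}$ closing the $\agdagreen{here}$ case and $\comm{4}$ together with the inductive hypothesis closing the $\agdagreen{below}$ case. One imprecision worth flagging: \agdablue{MCoind} makes $\agdapink{shape}\,(\agdablue{fstEq}\;y)$ compute definitionally to $\comm{1}\;y$, but it does \emph{not} make $\agdapink{pos}\,(\agdablue{fstEq}\;y)$ compute to the composite of $\comm{2}\;y$ with a corecursive call of \agdablue{fstEq}; that identification is only propositional, is exactly the content of \cref{lem-technical} (proved by path induction on $\comm{1}$), and is precisely the piece of coherence bookkeeping your final paragraph gestures at without quite pinning down as a separate lemma.
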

\begin{proof}[Proof of \cref{lem-beta-unique}, part 1: construction of \agdablue{fstEq}]
Recall the coinduction principle \agdablue{MCoind} from \cref{sec-background}. Using this, we can prove \agdablue{fstEq} in a rather straightforward manner. To apply it, we need to construct a binary relation \agdablue{R} on $\agdablue{M}$. We construct it as an inductive family that relates precisely those terms we need to prove equal, i.e.\ $\tilde{\beta}_1\; y$ and $\agdablue{$\overline\beta_1$}\; y$.
\ExecuteMetaData[agda/latex/Code.tex]{R}
We then prove that it is a bisimulation using copattern matching.
\ExecuteMetaData[agda/latex/Code.tex]{isBisimR}
\vspace{-.6cm} \ \\
\mycomment{
Here, the second goal is of type $\agdablue{R}\;(\agdapink{pos}\;(\tilde{\beta}_1\;y\;q_0))\;(\overline{\beta}_1\;(\beta h \; y\; q_1))$ while $\agdagreen{R-intro}\;(\beta h \; y\; q_1)$ is of type $\agdablue{R}\;(\tilde{\beta}_1\;(\beta h \;y\;q_1))\;(\overline{\beta}_1\;(\beta h \; y\; q_1))$. This mismatch is adjusted using $\comm{2}$. Explicitly, we transport over the path of types $(\lambda \,i \to \agdablue{R}\; (\comm{2}\;y\;(\sim i)\; (\textit{q-eq} \; (\sim i))))$.
}
 \ \\
This allows us to finish the construction of \agdablue{fstEq}.\\
\begin{minipage}{0.75\textwidth}
\ExecuteMetaData[agda/latex/Code.tex]{fstEqBad}
\end{minipage}
\begin{minipage}{0.25\textwidth}
\hfill\qedhere
\end{minipage}
\end{proof}

Before we continue with the construction of \agdablue{sndEq}, let us
briefly discuss some of the finer points concerning the construction
of \agdablue{fstEq}. Because we used \agdablue{MCoind}
and \agdablue{isBisimR} to construct \agdablue{fstEq}, its definition
is somewhat opaque. Fortunately, the construction is well-behaved
on \agdapink{shape} and thus the action of 
$\agdapink{shape}$ on $(\agdablue{fstEq}\;y)$ computes
definitionally to $\comm{1}\;y$. This means that the action of \agdapink{pos} on $(\agdablue{fstEq}\;y)$  can be
viewed as a dependent path $\agdapink{pos} \;(\tilde{\beta}_1 \;y) \;\agdablue{$\approxeq$}\;\agdablue{$\overline\beta_1$} \;\agdablue{$\circ$}\; ({\beta h}\; y)$ over the path of types ($\lambda\;i\;\to\; Q \;(\comm{1} \;y \;i) \;→ \;\agdablue{M}$).
There is another canonical element of this type obtained by simply composing \comm{2} with a corecursive call of \agdablue{fstEq} --- let us call it \agdablue{fstEqPos}. It is defined as the composition of paths
\begin{equation}\label{eq-fstEqPos}
\begin{tikzcd}[ampersand replacement=\&]
	\agdapink{pos} \;(\tilde{\beta}_1\;y) \& {}  \& {(\lambda \;q \to \tilde{\beta}_1\;(\beta h \;y\;q))} \& {} \& {(\lambda \;q \to \agdablue{$\overline{\beta}_1$}\;(\beta h \;y\;q))}
	\arrow[squiggly, "comm_2\;y", from=1-1, to=1-3]
	\arrow["{\agdablue{fstEq}} \;\circ\; (\beta h \;y)", from=1-3, to=1-5]
\end{tikzcd}
\end{equation}
where the squiggly arrow indicates that $(comm_2\;y)$ is a dependent path.
We can now ask whether $\agdapink{pos}$ computes to
$(\agdablue{fstEqPos}\;y)$ on $(\agdablue{fstEq}\;y)$ (which in
essence just says that \agdablue{fstEq} satisfies the obvious
coinductive computational rule). This would be entirely trivial if we
had assumed UIP but now becomes something we cannot take for
granted. Fortunately, it turns out we can still prove it.
\begin{lemma}\label{lem-technical}
For all $y:Y$, we have $\agdablue{fstEqPos}\;y \;\agdablue{$\equiv$} \;(\lambda \;i \to \agdapink{pos}\;(\agdablue{fstEq}\;y\;i))$.
\end{lemma}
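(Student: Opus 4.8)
The plan is to unfold $\agdablue{fstEq}$ to its definition via $\agdablue{MCoind}$, read off what its $\agdapink{pos}$-projection computes to, and then reconcile the resulting transport-heavy expression with the explicitly composed path $\agdablue{fstEqPos}$ by a naturality argument. Recall that $\agdablue{fstEq}\;y$ is $\agdablue{MCoind}\;\agdablue{isBisimR}$ applied to $\agdagreen{R-intro}\;y$, and that, as noted just above, $\agdapink{shape}\;(\agdablue{fstEq}\;y\;i)$ reduces definitionally to $\comm{1}\;y\;i$. Consequently both $\agdablue{fstEqPos}\;y$ and $(\lambda\;i \to \agdapink{pos}\;(\agdablue{fstEq}\;y\;i))$ are dependent paths over the very same family $(\lambda\;i \to Q\;(\comm{1}\;y\;i) \to \agdablue{M})$, with endpoints that coincide definitionally: at $i{=}0$ both are $\agdapink{pos}\;(\tilde{\beta}_1\;y)$, and at $i{=}1$ both are $(\lambda\;q \to \agdablue{$\overline\beta_1$}\;(\beta h\;y\;q))$ --- the latter because $\agdapink{pos}\;(\agdablue{$\overline\beta_1$}\;y)$ unfolds to $(\lambda\;q \to \agdablue{$\overline\beta_1$}\;(\beta h\;y\;q))$ by the corecursive definition of $\agdablue{$\overline\beta_1$}$. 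So no endpoint coercion is needed, and by function extensionality for dependent paths it suffices to fix a $\agdablue{PathP}$ $\textit{q-eq}$ over $(\lambda\;i \to Q\;(\comm{1}\;y\;i))$ and prove that evaluating the two candidate paths on $\textit{q-eq}$ yields the same path in $\agdablue{M}$.

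First I would compute $\agdapink{pos}\;(\agdablue{fstEq}\;y\;i)\;(\textit{q-eq}\;i)$. By the $\agdapink{pos}$-clause of $\agdablue{MCoind}$ this is a corecursive call $\agdablue{MCoind}\;\agdablue{isBisimR}\;r^{\star}\;i$, where $r^{\star}$ comes from feeding the position part of $\agdablue{isBisimR}$ the canonical transports $q_0$, $q_1$ of $(\textit{q-eq}\;i)$ to the two endpoints together with the $\agdablue{PathP}$ that $\agdablue{MCoind}$ builds between them. By the description of $\agdablue{isBisimR}$ recorded above, $r^{\star}$ is $\agdagreen{R-intro}\;(\beta h\;y\;q_1)$ transported along $(\lambda\;j \to \agdablue{R}\;(\comm{2}\;y\;(\sim j)\;(\textit{q-eq}\;(\sim j)))\;(\agdablue{$\overline\beta_1$}\;(\beta h\;y\;q_1)))$. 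Since $\agdablue{fstEq}\;y' = \agdablue{MCoind}\;\agdablue{isBisimR}\;(\agdagreen{R-intro}\;y')$ by construction, the only obstruction to $\agdapink{pos}\;(\agdablue{fstEq}\;y\;i)\;(\textit{q-eq}\;i)$ being the corecursive path $\agdablue{fstEq}\;(\beta h\;y\;q_1)\;i$ with rearranged endpoints is the transport standing in front of $r^{\star}$.

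The crux is a naturality statement: $\agdablue{MCoind}\;\agdablue{isBisimR}$, regarded as a fibrewise map from $\agdablue{R}$ into the path type of $\agdablue{M}$, commutes with transport. This is just the dependent action on paths of $(\lambda\;(m_0, m_1, r) \to \agdablue{MCoind}\;\agdablue{isBisimR}\;r)$, taken along the path of pairs whose first component is $(\lambda\;j \to \comm{2}\;y\;(\sim j)\;(\textit{q-eq}\;(\sim j)))$ and whose second component is constant; it shows that $\agdablue{MCoind}\;\agdablue{isBisimR}\;r^{\star}$ equals $\agdablue{fstEq}\;(\beta h\;y\;q_1)$ transported along $(\lambda\;j \to (\comm{2}\;y\;j\;(\textit{q-eq}\;j)) \;\agdablue{$\equiv$}\; \agdablue{$\overline\beta_1$}\;(\beta h\;y\;q_1))$. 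Since transporting in a path type amounts to pre- and post-composing with the (suitably reversed) boundary paths, the right-hand side is the composite of $(\lambda\;j \to \comm{2}\;y\;j\;(\textit{q-eq}\;j))$ followed by $\agdablue{fstEq}\;(\beta h\;y\;q_1)$ --- which is exactly the value on $\textit{q-eq}$ of $\agdablue{fstEqPos}\;y = (\comm{2}\;y) \cdot (\lambda\;q \to \agdablue{fstEq}\;(\beta h\;y\;q))$ from \eqref{eq-fstEqPos}. Feeding this back through function extensionality gives $\agdablue{fstEqPos}\;y \;\agdablue{$\equiv$}\; (\lambda\;i \to \agdapink{pos}\;(\agdablue{fstEq}\;y\;i))$.

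I expect the main obstacle to be the cubical bookkeeping rather than any conceptual hurdle. Two spots need care. First, pinning down exactly what $\agdapink{pos}\;(\agdablue{fstEq}\;y\;i)\;(\textit{q-eq}\;i)$ reduces to means tracking the three transports ($q_0$, $q_1$, and the $\agdablue{PathP}$ connecting them) introduced by $\agdablue{MCoind}$'s $\agdapink{pos}$-clause, how $\agdablue{isBisimR}$ consumes them, and matching $\agdablue{MCoind}$'s canonically built position path against the arbitrary $\textit{q-eq}$ handed over by function extensionality --- the very ``technical cube algebra'' the paper warns about for $\agdablue{MCoind}$. Second, both the naturality step, spelled out at the level of the actual definitions, and the identification of ``transport in a path type'' with a path composition hide $\agdablue{hcomp}$s that must be built and then reconciled with the $\agdablue{hcomp}$ defining the composition in $\agdablue{fstEqPos}$. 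Under UIP all of these comparisons collapse to a triviality, which is presumably why the analogous step is invisible in the original set-level proof; here it requires explicit fillers.
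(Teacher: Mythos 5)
Your approach is genuinely different from the paper's. You proceed ``computationally'': after function extensionality, you compute what $\agdapink{pos}\;(\agdablue{fstEq}\;y\;i)\;(\textit{q-eq}\;i)$ unfolds to via the $\agdapink{pos}$-clause of $\agdablue{MCoind}$, invoke a naturality argument (that $\agdablue{MCoind}\;\agdablue{isBisimR}$, viewed as a fibrewise map from $\agdablue{R}$ into paths of $\agdablue{M}$, commutes with transport), and identify transport in a path type with path concatenation to recover $\agdablue{fstEqPos}\;y$. The paper instead abstracts $\comm{1}\;y$ and performs path induction on it: in the resulting $\agdablue{refl}$ case the type family $\lambda\, i.\ Q\;(\comm{1}\;y\;i)$ becomes constant, the transports $q_0, q_1, q_2$ inside $\agdablue{MCoind}$ and the $\comm{2}$-transport inside $\agdablue{isBisimR}$ collapse or become simple coercions, and all the dependent paths become ordinary ones --- after which a (still nontrivial, but non-dependent) chunk of path algebra finishes the job. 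The trade-off is exactly the one you flag at the end: your route keeps the full dependence alive throughout, so you must reconcile $\agdablue{MCoind}$'s internally-constructed $\agdablue{PathP}$ $q_2$ with the arbitrary $\textit{q-eq}$ supplied by function extensionality and then unwind a naturality square involving dependent transports; the paper's path-induction step removes the dependence up front, at the cost of one abstraction, which in practice is what makes the formalisation tractable. Both arguments are pure, UIP-free path algebra, so in principle either could work; but your naturality step as stated glosses over the fact that $r^\star$ is built from \emph{three} separate transports in $\agdablue{MCoind}$ composed with a fourth inside $\agdablue{isBisimR}$, and an actual formalisation along your lines would have to produce explicit fillers gluing those together before the ``$\agdagreen{R-intro}$ transported once'' description is justified --- whereas under path induction on $\comm{1}$ most of those transports disappear rather than needing to be reassembled.
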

\begin{proof}[Proof sketch of \cref{lem-technical}]
The lemma is proved by abstracting and applying function
extensionality and path induction on \comm{1}. In this special case,
i.e.\ when $\comm{1}\;y = \agdablue{refl}$, one can simplify the
instances of \agdablue{isBisimR} and \agdablue{MCoind} used in the
construction of \agdablue{fstEq}. We omit the details which are just
technical path algebraic manipulations and refer the reader to the
formalisation.
\end{proof}
One may reasonably ask why this is a lemma and not simply
part of the \emph{definition} of \agdablue{fstEq}.
We discuss this in \cref{subsec:discussion}.

Let us now move on to the construction of \agdablue{sndEq}.
We construct \agdablue{sndEq} using~\cref{lem-pos-ind} which requires the following fact.
\begin{lemma}\label{lem-retract}
$\agdablue{$\overline{\beta}_1$}$ is an \agdablue{MAlg}-retraction.
\end{lemma}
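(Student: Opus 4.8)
The plan is to construct the required lift by hand rather than invoke any general factorisation result. Unfolding \cref{def-pos-ind} for the fixed point $\phi = \agdablue{MAlg}$, what has to be provided is, for each $y : Y$, a map $\widehat{(\overline{\beta}_1)}_y \colon Q\,((\agdablue{MAlg}\; \xi_0)\,(\overline{\beta}_1\,y)) \to Y$ whose composite with $\overline{\beta}_1$ equals $(\agdablue{MAlg}\; \xi_1)\,(\overline{\beta}_1\,y)$. I would take $\widehat{(\overline{\beta}_1)}_y \coloneqq \beta h\,y$. Two definitional facts make this work. First, by the way \agdablue{MAlg} is set up, the inverse $\xi$ of its fixed-point isomorphism is the pair of coinductive projections, so $(\agdablue{MAlg}\; \xi_0)\,m$ is $\agdapink{shape}\,m$ and $(\agdablue{MAlg}\; \xi_1)\,m$ is $\agdapink{pos}\,m$. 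Second, $\overline{\beta}_1$ is defined by copattern matching with clauses $\agdapink{shape}\,(\overline{\beta}_1\,y) = \beta s\,y$ and $\agdapink{pos}\,(\overline{\beta}_1\,y)\,q = \overline{\beta}_1\,(\beta h\,y\,q)$, so $\agdapink{shape}\,(\overline{\beta}_1\,y)$ reduces to $\beta s\,y$ and $\agdapink{pos}\,(\overline{\beta}_1\,y)$ reduces to $(\lambda\; q\to \overline{\beta}_1\; (\beta h\; y\; q))$, both definitionally.

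Combining these, the domain $Q\,((\agdablue{MAlg}\; \xi_0)\,(\overline{\beta}_1\,y))$ of the lift reduces definitionally to $Q\,(\beta s\,y)$, which is exactly the domain of $\beta h\,y$, so the candidate is well typed with no transport required. The triangle of \cref{def-pos-ind} then asks for $\overline{\beta}_1 \circ \widehat{(\overline{\beta}_1)}_y \equiv (\agdablue{MAlg}\; \xi_1)\,(\overline{\beta}_1\,y)$; the left-hand side unfolds to $(\lambda\; q\to \overline{\beta}_1\; (\beta h\; y\; q))$ and the right-hand side to $\agdapink{pos}\,(\overline{\beta}_1\,y)$, and these are the same term definitionally. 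Hence the square commutes with witness $\agdablue{refl}$, and $\overline{\beta}_1$ is an \agdablue{MAlg}-retraction, with lift given uniformly by $\beta h$.

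I do not expect a genuine obstacle here: the lemma is a direct computation, matching the corecursive definition of $\overline{\beta}_1$ against the definition of \agdablue{MAlg}. The one point worth checking against the formalisation is that $\xi$ for \agdablue{MAlg} is literally $(\agdapink{shape}, \agdapink{pos})$ rather than an inverse that is only propositionally equal to it; in the latter case one would have to transport the lift and the commuting witness along that equality and carry it through, but with the definitions chosen in the code this never arises. The real effort that this lemma enables lies ahead: it licenses applying \cref{lem-pos-ind} with $D = Y$, $f = \overline{\beta}_1$, and lift $\beta h$, which is what drives the construction of \agdablue{sndEq} in the next part of the proof.
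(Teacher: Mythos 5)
Your proof is correct and takes exactly the same approach as the paper: both define the lift by $\widehat{f}_y \coloneqq \beta h\;y$ and observe that the triangle in \cref{def-pos-ind} commutes definitionally by unfolding the copattern-matching definition of $\overline{\beta}_1$ (with $\agdapink{shape}\,(\overline{\beta}_1\,y) = \beta s\,y$ and $\agdapink{pos}\,(\overline{\beta}_1\,y)\,q = \overline{\beta}_1\,(\beta h\,y\,q)$). The paper states this more tersely, but the content is the same, including the observation that no transport is needed.
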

\begin{proof}
By unfolding definitions, we see that the statement boils down to constructing, for each $y:Y$, a function $\widehat{f}_y : Q\;(\beta s\;y) \to Y$ such that the following identity holds for each $q : Q\;(\beta s\;y)$.
\begin{equation}\label{eq-retract}
\agdablue{$\overline{\beta}_1\;y$}\;(\widehat{f}_y\;q) \;\agdablue{$\equiv$}_{\agdablue{M}}\; \agdablue{$\overline{\beta}_1\;y$} \;(\beta h\;y\;q)
\end{equation}
Defining $\widehat{f}_y = \beta h\;y$ makes \eqref{eq-retract} hold
definitionally.
\end{proof}

Finally, we are ready to construct \agdablue{sndEq} and thereby
finish the proof of~\cref{lem-beta-unique}.
\begin{proof}[Proof of \cref{lem-beta-unique}, part 2: construction of \agdablue{sndEq}]
For ease of notation, we define, for each $ind : Ind$ and $y:Y$, the function $\agdablue{tr}\;y : \agdablue{Posν}\;ind\;(\agdablue{$\overline{\beta}_1$}\;y) \to \agdablue{Posν}\;ind\;(\tilde{\beta}_1\;y)$ to be the function obtained by transporting via $(\agdablue{fstEq}\;y)^{\agdablue{$-1$}}$.\footnote{Formally, we define $\agdablue{tr}\;y = \agdablue{transport}\;(\lambda \;i\to \agdablue{Posν}\;ind\;((\agdablue{fstEq}\;y)^{\agdablue{$-1$}}\;i))$.} 
For the construction of \agdablue{sndEq}, we first note that, by function extensionality and the interchangeability of dependent paths and transports, constructing \agdablue{sndEq} is equivalent to showing that
\begin{equation}\label{eq-sndEqAlt}
\tilde{\beta}_2\;y\;ind\; (\agdablue{tr}\;y\;{t}) \;\agdablue{$\equiv$}\; \agdablue{$\overline{\beta}_2$} \;y\;ind\;t
\end{equation}
for $ind : Ind$ and $t
: \agdablue{Posν}\;ind\;(\agdablue{$\overline{\beta}_1$}\;y)$. In light of \cref{lem-retract}, we
may apply \cref{lem-pos-ind} in order to induct on $t$. When $t$ is of the form $\agdagreen{here}\;p$, there is not much to
show. Indeed, by translating this instance of \eqref{eq-sndEqAlt}
back into the language of dependent paths, we see that the data is given
precisely by \comm{3}.

When $t$ is of the form $\agdagreen{below}\;q\;p$, we may assume inductively that we have a path
\begin{equation}\label{eq-indhyp}
\tilde{\beta}_2\;(\beta h \;y\;q)\;ind\; (\agdablue{tr}\;(\beta h \;y\;q)\;{p}) \;\agdablue{$\equiv$}\; \agdablue{$\overline{\beta}_2$} \;(\beta h \;y\;q)\;ind\;p
\end{equation}
and the goal is to show that
\begin{equation}\label{eq-goal}
\tilde{\beta}_2\;y\;ind \;(\agdablue{tr}\;y\;{(\agdagreen{below}\;q\;p)}) \;\agdablue{$\equiv$}\; \agdablue{$\overline{\beta}_2$} \;y\;ind\;(\agdagreen{below}\;q\;p).
\end{equation}

The RHS of \eqref{eq-goal} is, by definition, equal to the RHS of
\eqref{eq-indhyp}. By commuting transports with \agdagreen{below} and
using \comm{4}, we can rewrite the LHS of \eqref{eq-indhyp} to a term
of the form $\tilde{\beta}_2\;y\;ind \;(\agdagreen{below}
\;(\agdablue{transport}\;\agdablue{\dots}\;q)\;(\agdablue{transport}\;\agdablue{\dots}\;p))$.
Commuting transports with \agdagreen{below} in the LHS of
\eqref{eq-goal}, we get a term of the same form, albeit with
transports over slightly different families. Thus, it remains to equate
these families. We spare the reader the technical details and
simply point out that this task, after some path algebra, boils down
to precisely~\cref{lem-technical}. This concludes the proof of \cref{lem-beta-unique} and thus also of \cref{thm-greatest-fp}.
\end{proof}

\begin{example}
	For a concrete example of \cref{thm-least-fp,thm-greatest-fp}, consider \agdablue{S}, \agdablue{P$_0$}, and \agdablue{P$_1$} as defined in \cref{example:list-param}. Then for a fixed $X : \agdablue{Type}$, $\llbracket\cont{\agdablue{S}}{\agdablue{P$_0$}, \agdablue{P$_1$}}\rrbracket (X, -) \colon\agdablue{Type}\to\agdablue{Type}$ has an initial algebra with carrier $\contfuncX{\agdablue{$\mathbb{N}$}}{\agdablue{$\Fin$}}{X}$, and it has a terminal coalgebra with carrier
	 $\contfuncX{\agdablue{$\mathbb{N}\infty$}}{\agdablue{$\mathsf{Cofin}$}}{X}$. \agdablue{$\mathbb{N}\infty$} is defined as in \cref{ex:conats} and \agdablue{$\mathsf{Cofin}$} is the inductive type family over \agdablue{$\mathbb{N}\infty$} of finite and (countably) infinite sets. We note that $\llbracket\cont{\agdablue{S}}{\agdablue{P$_0$}, \agdablue{P$_1$}}\rrbracket (X, -) \cong \agdablue{$\top$} \uplus (- \times X)$, the signature functor for \agdablue{$\mathsf{List}$}. $\cont{\agdablue{$\mathbb{N}$}}{\agdablue{$\Fin$}}$ is the container representation of lists while $\cont{\agdablue{$\mathbb{N}\infty$}}{\agdablue{$\mathsf{Cofin}$}}$ is the container representation  of colists (the type of finite and infinite lists).
\end{example}

\subsection{The absence of UIP and Agda's termination checker}\label{subsec:discussion}

One of the key contributions of this paper is the fact we were
able to formalise~\cref{lem-beta-unique} without using UIP.
Our main difficulty was proving the technical \cref{lem-technical} which
essentially says that $\agdablue{fstEq}$ is coinductively defined in
the obvious manner.
In theory, \cubicalagda should allow us to define \agdablue{fstEq} as
\ExecuteMetaData[agda/latex/Code.tex]{fstEqGood}
where we recall that \agdablue{fstEqPos} is defined by coinductively calling \agdablue{fstEq} as in \eqref{eq-fstEqPos}.
This construction would make \cref{lem-technical} hold
definitionally, without requiring any form of UIP. There are, however, two issues with it.
Firstly, \agda does not accept this definition and
raises a termination checking error. We believe this to be an issue with
\cubicalagda's current termination checker. Generally speaking, in order to
check whether a corecursive function terminates, \agda needs to ensure
its output can be produced in a finite amount of steps. We call such
functions \textit{productive}. In the cases when it is not obvious from the
structure of the code that it is productive, e.g.\ if we make a
corecursive call and do something else with it before returning,
rather than returning directly, \agda usually raises a termination
error. While this is justified in general, composing productive calls
using \cubicalagda's primitive path composition function, \agdablue{hcomp}, should be productive, but \agda still raises an
error. This was raised as a GitHub issue \cite{github_issue}.

If the first issue were to be
resolved, our proof of \cref{thm-greatest-fp} could be made
significantly shorter, as we would not need to use \agdablue{M}'s coinduction
principle \agdablue{MCoind} in the definition of \agdablue{fstEq}.
Nevertheless, there is another issue with such a construction of \agdablue{fstEq}, namely that it relies heavily on the intricacies of cubical type theory (specifically when we introduce a path variable $i$ and then copattern match). As a result, we could not expect to reproduce such a proof in other non-cubical type theories. Therefore, from a mathematical perspective, the issue we faced with the termination checker may have been a blessing in disguise.
Our original motivation behind formalising \cref{lem-beta-unique} was to support the claim by Abbott et al.~\cite{abbott_tcs_2005}\ that the theory of containers can be understood in \emph{type theory}.
Here, we aim to interpret \emph{type theory} as generally as
possible, rather than restricting ourselves to cubical type theory.
Since we had to define  \agdablue{fstEq} using \agdablue{MCoind} rather than the \cubicalagda-specific
construction above, our proof should hold in any type theory having function extensionality (e.g.\ setoid type theory \cite{setoid-tt-19}) and coinduction for \agdablue{M}-types.
The fact that the authors of \cite{abbott_tcs_2005} never mention any results akin to \cref{lem-technical} suggests that they worked under a tacit assumption of UIP, which is further evidence that our formalisation indeed is a generalisation of previous results.

\section{Conclusion, Related Work, and Future Work}
\label{sec-conclusions}
In this paper, we presented a formalisation of the following results on containers, doing so without making any h-set assumptions, thereby lifting the original results from the category of sets to the wild category of types.

\begin{itemize}
\item $\contfuncX{\agdablue{W}\; S\; Q}{\agdablue{Pos WAlg}}{\mathbf{X}} \text{ is (the carrier of) the initial } \llbracket \cont{S}{\mathbf{P}} , Q\rrbracket (\mathbf{X}, -)\text{-algebra, and}$

\item $\contfuncX{\agdablue{M}\; S\; Q}{\agdablue{Pos MAlg}}{\mathbf{X}} \text{ is (the carrier of) the terminal } \llbracket \cont{S}{\mathbf{P}} , Q\rrbracket (\mathbf{X}, -)\text{-coalgebra.}$
\end{itemize}

While the first proof was straightforward, the second proof needed more careful consideration. In particular, it employed a modified version of \agdablue{Pos}'s induction principle and required various workarounds for \cubicalagda to accept our proof. These workarounds, however, suggested that our construction holds not only in \textit{cubical} type theory, but in any type theory with function extensionality and \agdablue{W-} and \agdablue{M-} types.

Similar results have been formalised in \systemname{Lean} by Avigad et al.\ \cite{avigad-2019}, who utilised a variation on bounded natural functors in their formalisation, although to our knowledge our formalisation is the first one not relying on UIP. 
Vezzosi \cite{vezzosi-note} wrote about the semantics of allowing path abstraction in copattern matching definitions in \cubicalagda.
Ahrens et al.\ \cite{nonwellfounded-trees-2015} formalised \agdablue{M}-types in \cubicalagda as limits of chains, so that their definition does not use the \agdaorange{coinductive} keyword. Since our goal was not only to establish the theoretical results in a more general setting, but also to make use of \cubicalagda's support for proving equalities on coinductives, we chose to use native coinductive types and defined  \agdablue{M} as shown in \cref{sec:w-m}. It is, however, possible that using their definition might have circumvented the issues we faced with the termination checker, due to avoiding the use of native coinductive types, although their approach relies on the univalence axiom, while ours does not.

The formalisation of \cref{thm-least-fp,thm-greatest-fp} is part of ongoing work on giving a comprehensive rendition and formalisation of containers in type theory. In terms of the main result of \cite{abbott_tcs_2005} stating that `each strictly positive type in $n$ variables can be interpreted as an $n$-ary container', we have proved that `container functors, without any h-level restriction, are closed under $\mu$ and  $\nu$'. In our experience, compared to strictly positive types formed using $0, 1, +, \times,$ and $\to$, $\mu$ and $\nu$ are the most challenging cases. One aspect of the original result in \cite{abbott_tcs_2005} that we are still missing is that they show that `\textit{containers} are closed under $\mu$ and $\nu$', i.e.\ containers, and not their functor interpretations, model strictly positive types. In the h-set level setting of \cite{abbott_tcs_2005}, one can easily move closure properties from container functors to containers via the fully faithful functor $\danascott\colon \mathsf{Cont}\to[\mathsf{I\to Set}, \mathsf{Set}]$. In our more general setting of using \agdablue{Type} instead of $\mathsf{Set}$, the wild functor on wild categories $\danascott\colon \mathsf{Cont}\to [\mathsf{I}\to\agdablue{Type}, \agdablue{Type}]$ being in some sense fully faithful does not immediately follow. To show this, we might need to somehow `tame' the wild category $[\mathsf{I}\to\agdablue{Type}, \agdablue{Type}]$ by adding coherences, but stating \agdablue{Type} as a higher category in HoTT is still an open problem (see e.g.\ \cite{Buchholtz2019}).

Having formalised these results on fixed points of containers without making h-set assumptions suggests that equivalent results should hold for the more general groupoid (or symmetric) containers and categorified containers \cite{hakon-masters, altenkirch-hott-uf-abstract}.  These more general kinds of containers are of interest as they can describe a larger class of types, such as the type of multisets, which are not covered by h-set based container definitions. Their theory is however not fully worked out yet, so we leave this for future work.


\bibliography{refs}

\end{document}